\def\eqref#1{equation~\ref{#1}}
\def\1{\bm{1}}
\DeclareMathAlphabet{\mathsfit}{\encodingdefault}{\sfdefault}{m}{sl}
\SetMathAlphabet{\mathsfit}{bold}{\encodingdefault}{\sfdefault}{bx}{n}
\definecolor{mycitecolor}{HTML}{3498DC}
\definecolor{mylinkcolor}{HTML}{E74D3B}
\definecolor{myurlcolor}{HTML}{980000}
\definecolor{mydarkgreen}{rgb}{0,0.6,0}
\definecolor{myorange}{HTML}{E7730D}
\definecolor{myblue}{HTML}{4594c1}
\definecolor{OliveGreen}{rgb}{0.44, 0.51, 0.24}
\definecolor{AAGray}{HTML}{999999}
\definecolor{AABlue}{HTML}{6db0ce}
\definecolor{AARed}{HTML}{ce471a}
\newcommand{\hypothesis}[2]{%
\begin{tcolorbox}[colback=OliveGreen!10!white,leftrule=2.5mm,size=title]
\textbf{#1}: #2
\end{tcolorbox}
}
\definecolor{darkgreen}{HTML}{C00000}
\def\eqref#1{equation~\ref{#1}}
\newtheorem{theorem}{Theorem}
\newtheorem{corollary}{Corollary}
\title{Mitigating the Backdoor Effect for Multi-Task Model Merging via Safety-Aware Subspace}
\author{
Jinluan Yang$^{1}$ 
\hspace{6.5pt} Anke Tang$^2$ \hspace{7pt} Didi Zhu$^1$ 
\hspace{7pt} Zhengyu Chen$^1$ \hspace{7pt}  Li Shen$^{3}$\thanks{Corresponding Authors.} \hspace{7pt} Fei Wu$^{1*}$ \\
$^1$Zhejiang University\hspace{4pt} 
$^2$Whuhan University \hspace{4pt} 
$^3$Shenzhen Campus of Sun Yat-sen University\\
\texttt{yangjinluan@zju.edu.cn}
}
\begin{document}

\maketitle

\begin{abstract}
    Model merging has gained significant attention as a cost-effective approach to integrate multiple single-task fine-tuned models into a unified one that can perform well on multiple tasks. 
    However, existing model merging techniques primarily focus on resolving conflicts between task-specific models, they often overlook potential security threats, particularly the risk of backdoor attacks in the open-source model ecosystem.
    In this paper, we first investigate the vulnerabilities of existing model merging methods to backdoor attacks, identifying two critical challenges: backdoor succession and backdoor transfer. 
    To address these issues, we propose a novel \emph{Defense-Aware Merging (DAM)} approach that simultaneously mitigates task interference and backdoor vulnerabilities. Specifically, DAM employs a meta-learning-based optimization method with dual masks to identify a shared and safety-aware subspace for model merging.
    These masks are alternately optimized: the Task-Shared mask identifies common beneficial parameters across tasks, aiming to preserve task-specific knowledge while reducing interference, while the Backdoor-Detection mask isolates potentially harmful parameters to neutralize security threats. This dual-mask design allows us to carefully balance the preservation of useful knowledge and the removal of potential vulnerabilities.
   Compared to existing merging methods, DAM achieves a more favorable balance between performance and security, reducing the attack success rate by 2-10 percentage points while sacrificing only about 1\% in accuracy. Furthermore, DAM exhibits robust performance and broad applicability across various types of backdoor attacks and the number of compromised models involved in the merging process. Our codes and models can be accessed through \href{https://github.com/Yangjinluan/DAM}{DAM}.

  \end{abstract}

  \section{Introduction}
  
  The rapid advancement of artificial intelligence has led to the emergence of pre-trained models that demonstrate exceptional performance across various tasks \citep{yang2024model}. However, training and deploying individual models for each specific task not only incurs substantial computational costs but also results in knowledge redundancy and storage inefficiencies. To address these challenges, multi-task model merging,  as a promising solution, integrates parameters from multiple single-task models into a unified model \citep{tang2024fusionbench}, which not only enhances task-specific performance but also significantly improves computational efficiency and cost-effectiveness \citep{izmailov2018averaging, frankle2020linear, ilharco2022patching}.

  
  

Current research in model merging primarily focuses on resolving conflicts among task-specific models to achieve effective knowledge transfer and inheritance. Pioneering merging strategies based on task vectors include gradient conflict-based methods \citep{yadav2024ties} and subspace-based approaches \citep{tang2023concrete, tam2024merging, yu2024language}. However, in the pursuit of performance optimization, these methods often neglect critical security considerations, particularly the risk of backdoor attacks. The open-source model ecosystem \citep{liu2024lora} facilitates frequent model downloading, fine-tuning, and re-uploading by users. While this practice enhances knowledge dissemination and collaborative development, it simultaneously introduces potential security vulnerabilities. Malicious actors may exploit this ecosystem by uploading models injected with backdoors. When these compromised models are incorporated into multi-task merging processes, the resultant merged model may produce misguided outputs in response to specific trigger inputs. This scenario poses a significant threat to the integrity and reliability of the entire open-source ecosystem. Consequently, we urgently need to address two critical questions in the model merging process: 
\vspace{-0.2cm}
\begin{center}
  \textbf{\emph{Have existing multi-task merging methods adequately addressed these overlooked security issues?\\
  \vspace{0.15cm}
    If not, how can we better mitigate the backdoor effects in multi-task merging?}}
\end{center}
  \vspace{-0.2cm}
  In this paper, we first revisit current multi-task merging strategies and evaluate their performance and safety when merging potentially compromised single-task models (See more analysis details in Section \ref{subsec:analysis}). This analysis reveals two critical issues previously overlooked: \textit{backdoor succession} and \textit{backdoor transfer}. \textit{Backdoor succession} refers to the phenomenon where the harmful elements from one or more backdoored models still persist in the merged model, while \textit{backdoor transfer} describes the propagation of these harmful elements from backdoored models to clean models, affecting the security and performance of clean models during the merging process. Both of them pose the security risk of the multi-task merging process, highlighting the necessity for a safety-aware model merging approach that not only maximizes performance but also ensures the safety of the merged model.
  

  To address these challenges, we propose a Defense-Aware Merging (DAM) algorithm that simultaneously mitigates task interference and backdoor issues by identifying a shared and safe-aware subspace. 
  To achieve this dual objective, we develop a meta-learning-based optimization method employing two specialized masks: a Task-Shared Mask and a Backdoor-Detection Mask. 
  The Task-Shared Mask identifies shared parameter subspaces across different tasks, aiming to preserve task-specific knowledge while mitigating interference between tasks. 
  Concurrently, the Backdoor Detection Mask is designed to detect parameters potentially associated with backdoor threats, isolating and neutralizing harmful elements that might have been introduced through contaminated models. These two masks are optimized alternately in an iterative process. After the alternating optimization, we reset the parameters within the full mask region of task vectors to their pre-trained weights to develop a merged model that effectively balances performance and safety.
  
  Through extensive experiments, DAM demonstrates superior performance by reducing attack success rates by 2-10\% across various scenarios, while sacrificing only about 1\% in accuracy, thus achieving a more favorable balance between performance and security. Furthermore, DAM exhibits robust performance and broad applicability across various types of backdoor attacks and the number of backdoored models involved in the merging process.  In summary, the contributions of this paper are threefold:


  \begin{itemize}
    \item We reveal for the first time the vulnerabilities of current multi-task merging methods under backdoor attacks, identifying "backdoor succession" and "backdoor transfer" as core challenges.
    \item We propose a novel Defense-Aware Merging (DAM) algorithm through a dual-mask optimization, that not only mitigates task interference but also effectively alleviates the backdoor effect for the multi-task model merging process.
    \item Extensive experiments validate the effectiveness of DAM, demonstrating significant performance improvements across multiple benchmarks and backbone networks while maintaining minimal accuracy degradation.
  \end{itemize}
\section{Exploring the Backdoor Effect during Model Merging}\label{exploration}
In this section, we first provide an overview of existing multi-task merging techniques and discuss the difference between optimized objects while merging considering the existence of backdoor. Then, extensive experiments are conducted to unpack the phenomenon of backdoor succession and backdoor transfer.

\subsection{Preliminaries}

\begin{figure*}
  \setlength{\abovecaptionskip}{0cm}   
  \setlength{\belowcaptionskip}{0cm}   
  \begin{center}
    \subfigure[MNIST]{
      \includegraphics[width=0.3\linewidth]{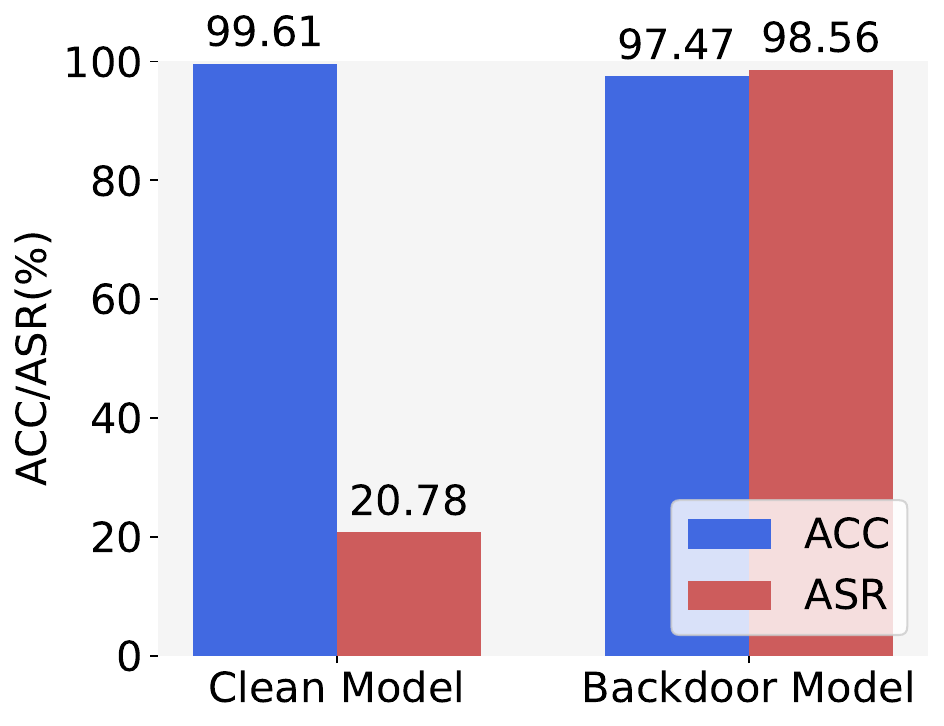}
    }
    \hspace{0.2cm} 
    \subfigure[CARS]{
      \includegraphics[width=0.3\linewidth]{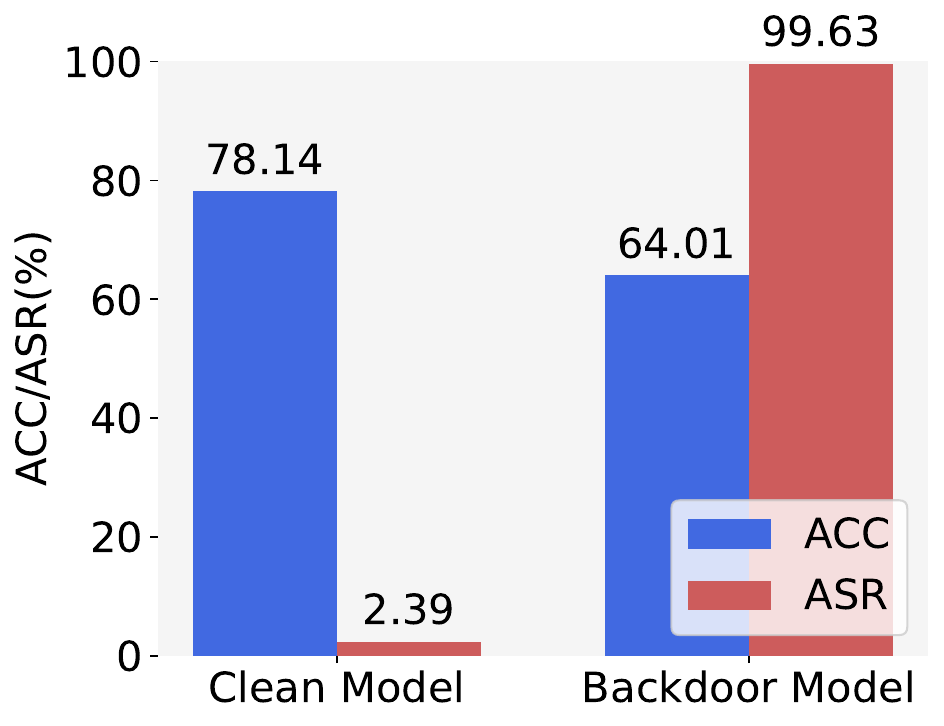}
    }
    \hspace{0.2cm}
    \subfigure[RESISC45]{
      \includegraphics[width=0.3\linewidth]{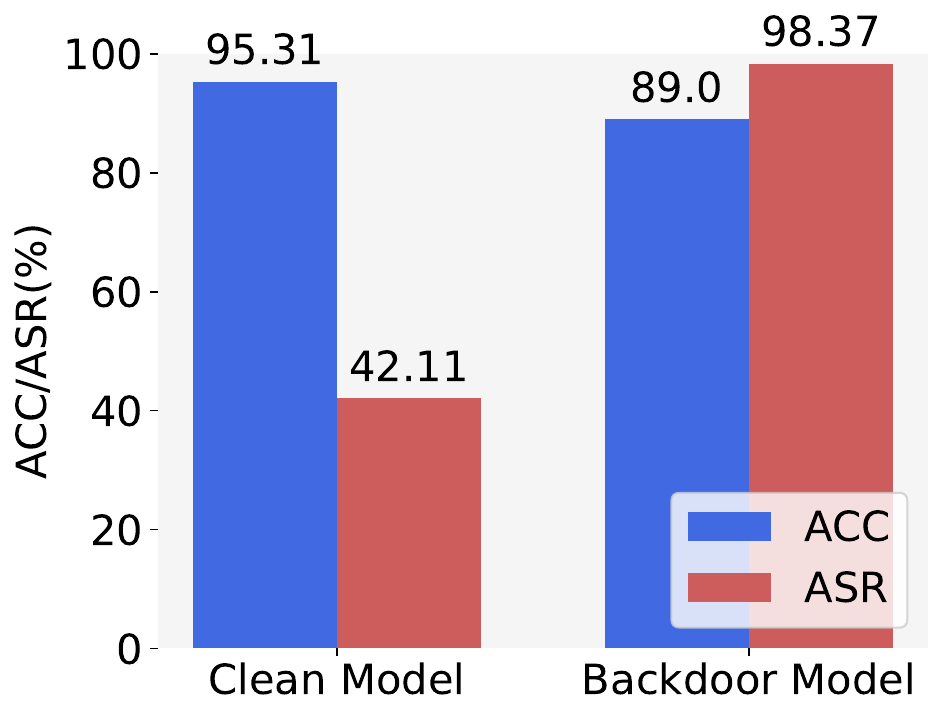}
    }
    \caption{Performance comparison between clean and backdoor(TrojVit) adopting CLIP-ViT-B/32. }
    \label{clip_base_individual_part}
  \end{center}
  \vspace{-0.4cm}
\end{figure*}

\textbf{Previous Multi-Task Merging Techniques.}
Denote the $f_{\theta}$ as the CLIP-like pre-trained model $f$ with weights ${\theta}$ and a set of datasets $\mathcal{D} = \{D_i\}_{i=1}^n$ for $n$ downstream tasks. We can fine-tune the pre-trained model parameterized by $\theta_{\text{pre}}$ to acquire $n$ task-specific models parameterized by $\{\theta_{i}\}_{i=1}^n$. Then, for each task $i$, the task vector can defined as the difference between $\theta_{\text{pre}}$ and $\theta_{i}$, i.e., $\tau_i = \theta_i - \theta_{\text{pre}}$. Existing merging methods can be formulated as the optimization to acquire $\theta_{merged} = \theta_{\text{pre}} +
  \sum_{i=1}^n \{\lambda_i {{\tau_{i}}^{\prime}}\}$, where ${\forall} \lambda \in [0,1]$ refers to the merging coefficient and $\phi(\tau_{i}) = {\tau_{i}}^{\prime}$ represents the revision for each task vector. The main difference among these methods exists in ways to acquire the ${\tau_{i}}^{\prime}$ and $\lambda_i$. For example, both Weight Average \citep{wortsman2022model} and Task Arithmetic \citep{ilharco2022editing} adopt the origin task vector $\tau_i$, with the $\lambda=\frac{1}{n}$ adapted to the number of tasks and a fixed $\lambda=0.3$ respectively. Ties-Merging \citep{yadav2024ties} and Concrete \citep{tang2023concrete} address the interference among tasks and replace the original task vector with ${\tau_{i}}^{\prime}$. Moreover, RegMean \citep{jin2022dataless} and AdaMerging \citep{yang2023adamerging} respectively formulate the optimization of $\lambda_i$ according to the model’s activations and the entropy on an unlabeled held-out dataset. However, these works share the same and single optimization objective to maximize the performance of the merged model on the clean test datasets as Eq.\ref{previous merging} from the evaluation perspective, where $\mathcal{A} $ is a model merging algorithm associated with $\phi(\cdot)$ and $\lambda$. It is uncertain if current merging methods remain effective considering safety issues like backdoors, which introduces potential but important concerns for deploying merging algorithms to more scenarios.
\begin{equation}
  \label{previous merging}
  \max_{\phi,\lambda} \frac{1}{n} \sum_{i=1}^{n} \text{Performance}\left(
  \mathcal{A}(\theta_{\text{pre}},\phi(\tau_{i}),\lambda_i), D_{i}^{\text{test}}
  \right).
\end{equation}
\vspace{-0.3cm}

\textbf{Merging Considering the Existence of Backdoor.} Typically, a model injected with the backdoor behaves normally for clean input data but will be misguided toward the target class when the inputs contain a specific trigger \citep{wu2022backdoorbench}. Thus, when backdoored task-specific models are utilized during merging, the Eq.\ref{previous merging} should also be rewritten as Eq.\ref{merging considering safety}.  Universally, the accuracy (ACC) represents the percentage of test input images without triggers classified into their corresponding correct classes 
 (true label), while the attack success rate (ASR) shows the percentage of input images embedded with a trigger classified into the pre-defined target class (label decided by the attackers) \citep{wu2022backdoorbench}. Ideally, the optimization of model merging should be towards high ACC and low ASR simultaneously to maximize safety and performance considering the existence of the backdoor. The $\omega$ is set as the balance weight for the performance and safety by default.

\begin{equation}
  \label{merging considering safety}
  \max_{\phi,\lambda} \frac{1}{n} \sum_{i=1}^{n} \left( \text{Performance}\left(
    \mathcal{A}(\theta_{\text{pre}},\phi(\tau_{i}),\lambda_i), D_{i}^{\text{test}}
    \right) + \omega \cdot \text{Safety}\left(
    \mathcal{A}(\theta_{\text{pre}},\phi(\tau_{i}),\lambda_i), D_{i}^{\text{test with trigger}}
    \right) \right)
\end{equation}

\begin{figure*}
  \setlength{\abovecaptionskip}{0cm}   
  \setlength{\belowcaptionskip}{0cm}   
  \begin{center}
    \subfigure[RESISC45 and EuroSAT related to Backdoor]{
      \includegraphics[height=1.5in]{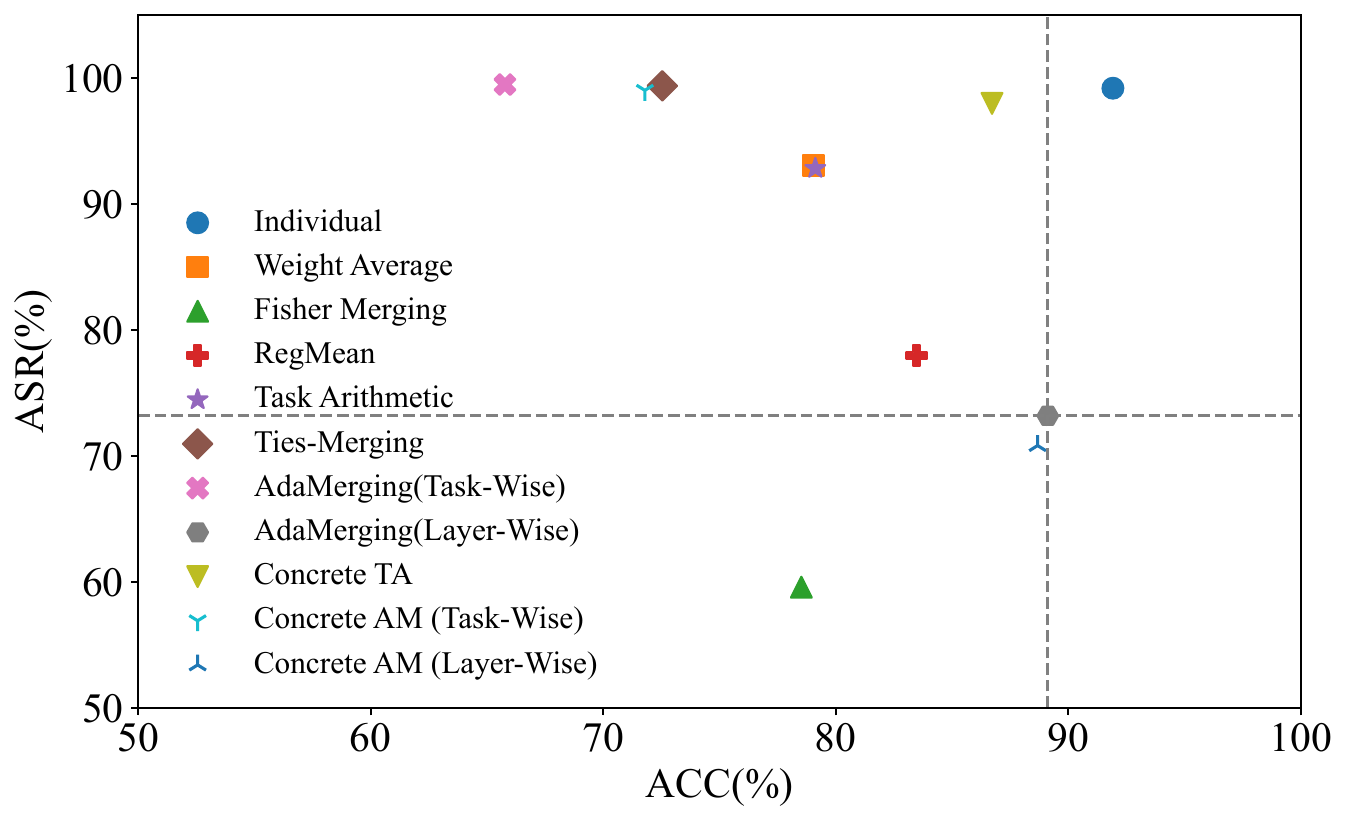}
    }
    \hspace{0.6cm} 
    \subfigure[Full Six Tasks]{
      \includegraphics[height=1.5in]{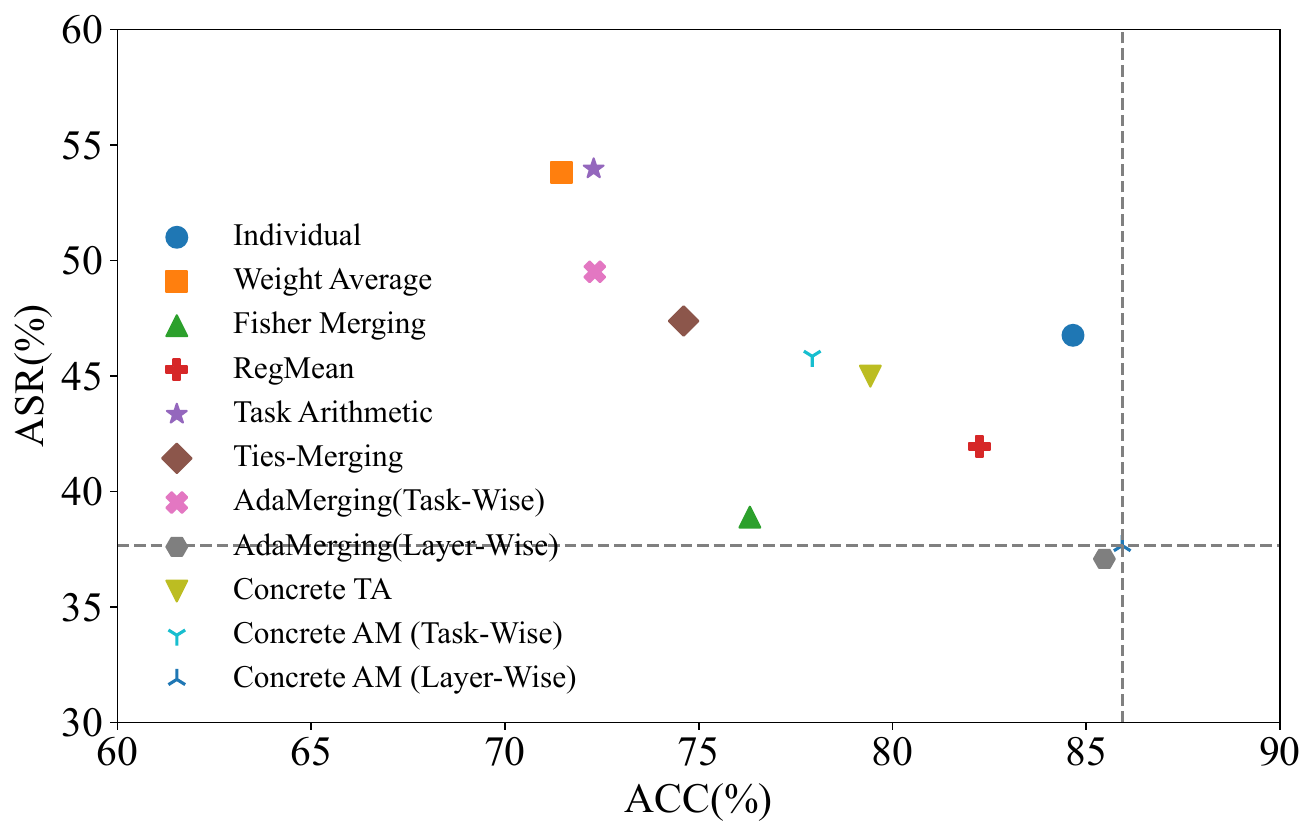}
    }
    \caption{Backdoor Succession Evaluation: Average performance on multi-tasks while merging two backdoored task-specific models (RESISC45 and EuroSAT) and four clean task-specific models (MNIST, CARS, SVHN and DTD). The grey line shows the SOTA multi-task merging technique, but its ASR still exceeds $70\%$ on tasks related to the backdoor and $35\%$ on full tasks though achieves great performance(ACC).}
    \label{backdoor_succession_2}
  \end{center}
  \vspace{-0.6cm}
\end{figure*}

\begin{figure}[t]
  \setlength{\abovecaptionskip}{0cm}   
  \setlength{\belowcaptionskip}{0cm}   
  \begin{center}
    \subfigure[RESISC45 (backdoor)]{
      \includegraphics[width=0.43\linewidth]{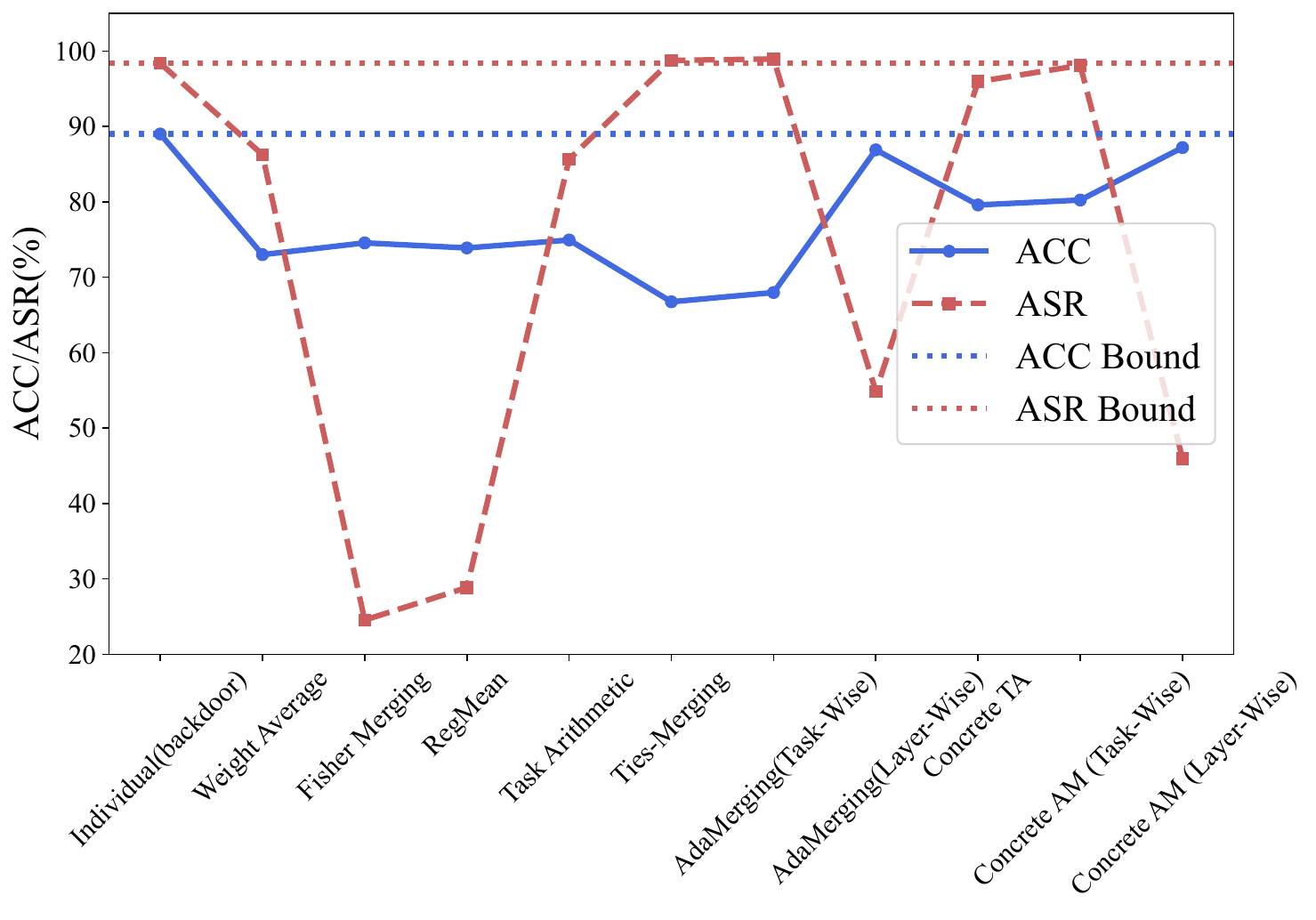}
      \label{RESISC45 (backdoor)}
    }
    \subfigure[SVHN (clean)]{
      \includegraphics[width=0.43\linewidth]{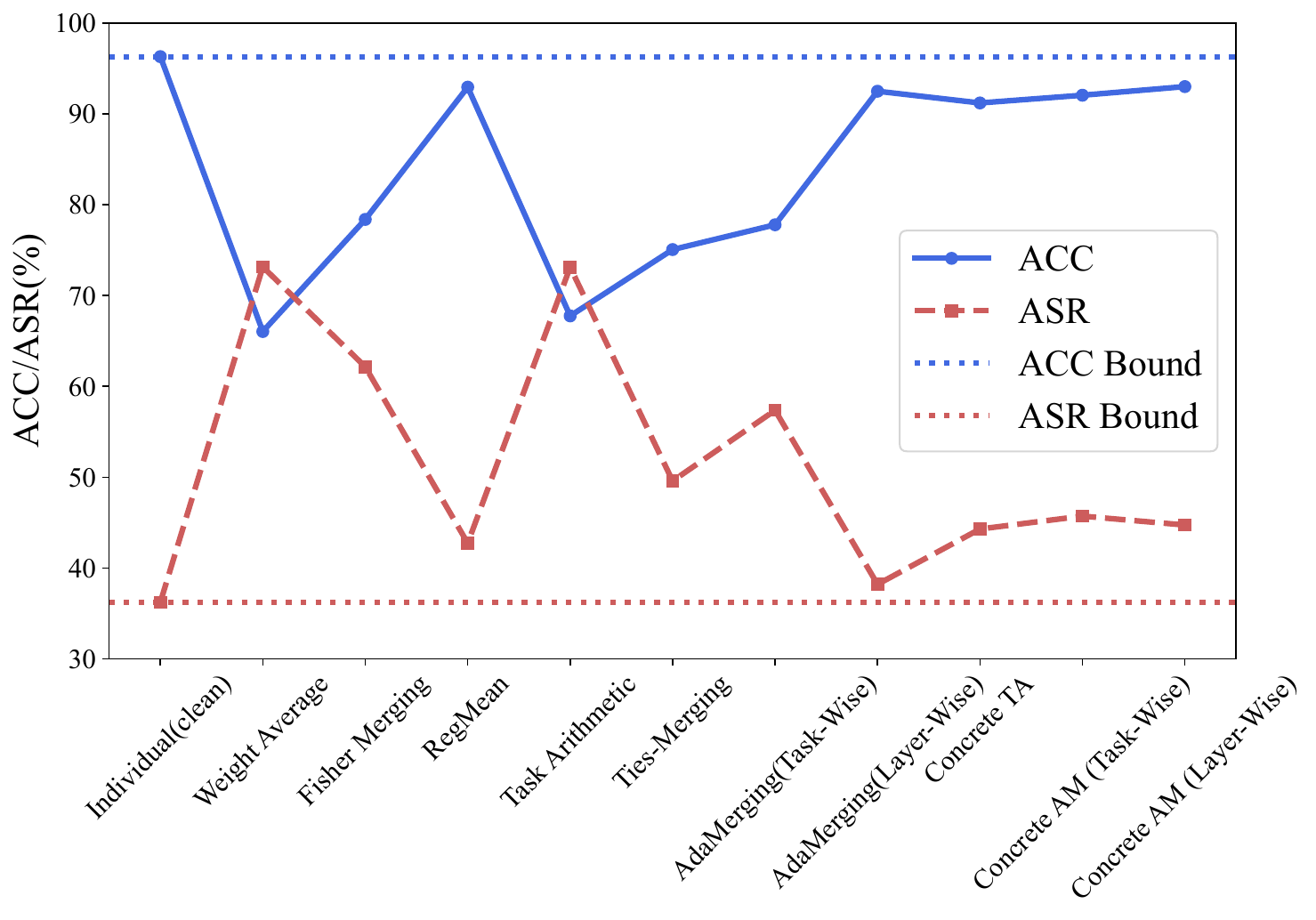}
      \label{SVHN (clean)}
    }
    \caption{Backdoor Transfer Evaluation: Single-task performance while merging two backdoored task-specific models (RESISC45 and EuroSAT) and four clean task-specific models (MNIST, CARS, SVHN and DTD). The ACC Bound and ASR Bound can be set according to the clean or backdoored individual fine-tuned models. The ideal merged model should be close or even upper to the ACC Bound and lower or at least close to the ASR Bound, but different merging methods exhibit unexpected trends due to the backdoor transfer.}
    \label{backdoor_transfer}
  \end{center}
  \vspace{-0.8cm}
\end{figure}

\subsection{ The Settings of Model Merging Considering the Backdoor}
Based on Eq.\ref{merging considering safety}, we further explore whether existing merging methods can naturally cope with neglected backdoor issues. Specifically, we take the CLIP-ViT \citep{radford2021learning} as the pre-trained model and explore the backdoor effect during model merging on image classification tasks \citep{tang2024fusionbench}.

\textbf{Backdoored Model Constructions:} We utilize the commonly used vit-specific (patch-wise) backdoor attack,
TrojVit \citep{zheng2023trojvit}, to construct backdoored models. As shown in Figure \ref{clip_base_individual_part} and Figure \ref{clip_large_individual}, backdoored models achieve good ACC but higher (worse) ASR than clean ones, bringing the potential risk to model merging. The detailed construction implementations can be shown in Appendix \ref{backdoor_construction_appendix}.

\textbf{Attack and Defense Scenario:} We assume that the adversary is the provider of backdoored models and doesn't know other task-specific models and merging algorithms. For defenders, they only have different checkpoints without the knowledge of whether or not and which region they are injected with a backdoor.



\subsection{Unpacking the Phenomenon of Backdoor Succession and Backdoor Transfer}
\label{subsec:analysis}
\vspace{-0.2cm}
The original object of multi-task merging is to provide a cost-effective parameter-level fusion strategy to obtain a multi-task model that can achieve close or better performance than individual fine-tuned models for each task \citep{yang2024representation}. While considering the existence of the backdoor during merging, this object can be transformed as \emph{Promote the merged model close or even upper to the ACC of clean individual fine-tuned models and lower or at least equal to the ASR of the backdoored individual fine-tuned models.} Thus, we explore the backdoor effect in multi-task merging by comparing the ACC and ASR of the merged model with those of individual fine-tuned models across tasks. We have the following two important findings.

\textbf{Backdoor Succession During Merging:} From the Figure \ref{backdoor_succession_2} and Figure \ref{backdoor_succession_4}, we can observe that the development of merging techniques has increased the ACC of merged model close or even better than individual fine-tuned models. While considering the safety issues, the ASR of the merged model on the backdoor-related task (e.g. EUROSAT, shown in Table \ref{backdoor_succession_eurosat} ) decreases but is still high, due to the backdoor succession as \emph{Finding 1}.

\vspace{-0.2cm}
\hypothesis{Finding 1}{Backdoor Succession: The backdoor effect from backdoored task-specific models can not be mitigated well though we have adopted existing state-of-the-art techniques during multi-task merging.}

\textbf{Backdoor Transfer During Merging:} The Figure \ref{backdoor_transfer} and Figure \ref{backdoor_transfer_appendix} describe that though we provide clean task-specific models for merging (e.g. SVHN shown in Table \ref{backdoor_transfer_svhn}), the ASR of the merged model on SVHN will unexpectedly increase compared with the individual fine-tuned model on SVHN, due to other backdoored task-specific models. This can be accounted for by \emph{Finding 2}. Specifically, from the perspective of parameter disentanglement, different task-specific models have a common parameter region, if this task-general region is injected with the backdoor, the backdoor effect can be seen as transferring from backdoored models to clean models during model merging. More detailed discussions can be shown in the Appendix \ref{disentangle}.

\vspace{-0.2cm}
\hypothesis{Finding 2}{Backdoor Transfer: The backdoor effect from backdoored task-specific models can transfer to other clean task-specific models, which leads to special safety-related challenges for multi-task merging.}

\section{Defense-Aware Merging}
\vspace{-0.4cm}
Based on the analysis shown in Section \ref{exploration}, we face two challenges to achieve a merged model with high ACC and low ASR: (i) How can we cope with backdoor issues during merging when we can not know the backdoor types and whether the models that need merging are safe or not in advance? (ii) How can we unify the optimization process to achieve a good trade-off between performance and safety during merging?


\textbf{To address (i)}, we respectively synthesize a universal perturbation for each task-specific model to represent undesired behavioral changes from trigger insertion, without requiring assumptions about backdoor information (e.g. the trigger's size or location) \citep{zeng2024beear}. Assisted by the synthesized perturbations, we can identify and adjust the parameters related to the backdoor during merging, assuming that the backdoor-related parameters are more sensitive to the perturbations \citep{wu2022backdoorbench}. \textbf{To address (ii)}, we provide a dual-mask optimization strategy to identify a shared and safety-aware subspace on task vectors, which represents the low-dimensional-parameter area compared with the full parameters of task vectors, aiming to concurrently mitigate interference and backdoor issues during merging with the safe and unsafe components discerned using the learned perturbations from (i), achieving a good trade-off between performance and safety.

Thus, the framework of DAM can be formulated as a bi-level optimization problem as Eq. \ref{our algorithm_equation}, where $\mathcal{A}$ is the merging operation associated with $\lambda$ and two different mask designs (Task-Shared Mask $M_1$ and Backdoor-Detection Mask $M_2$), where $M_1, M_2 \in \mathbb{R}^d$ referring to the $\phi(\cdot)$ in Eq \ref{previous merging} to revise the task vectors $\tau$. The $M_1 \odot M_2$ aims to achieve a union set for two masks through the element multiplication. The $\alpha$ is a balance weight for mask optimization, with larger $\alpha$ values favoring safety over performance.
\begin{align}
  \label{our algorithm_equation}
  \mathop{\min}_{M_1,M_2} & \sum_{i=1}^{n} \left[ {\mathcal{L} \left( \mathcal{A}(\theta_{\text{pre}}, \{M_1 \odot \tau_{j}\}_{j=1}^n, \lambda^{*}), D_i \right)} +  \alpha {\mathcal{L} \left( \mathcal{A}(\theta_{\text{pre}},  \{M_1 \odot M_2 \odot \tau_{j}\}_{j=1}^n, \lambda^{*}), D_i + \Delta_i^{*} \right)} \right] \\
  \nonumber \text{s.t.}   & \left\{
  \begin{aligned}
     & \mathcal{\lambda}^{*}(M_1) = \mathop{\arg\min}\limits_{\lambda} \sum_{i=1}^{n} \mathcal{L} \left( \mathcal{A}(\theta_{\text{pre}},  \{ M_1 \odot \tau_{j}\}_{j=1}^n, \lambda), D_i \right),       \\
     & \Delta_i^{*}(M_1,M_2) = \mathop{\arg\min}\limits_{\Delta_i} \; \mathcal{L} \left( \mathcal{A}(\theta_{\text{pre}},  \{M_1 \odot M_2 \odot \tau_{j}\}_{j=1}^n, \lambda^{*}), D_i+\Delta_i \right),
  \end{aligned}
  \right.
\end{align}
 The $\mathcal{L}$ is usually defined as the unsupervised loss such as entropy loss as Eq. \ref{tta} for test-time adaptation under the share-and-play scenario \citep{yang2023adamerging} where we have no access to the training data. For image classification tasks, we can initially employ the merged model to generate predictions $\hat{y}$ on the unlabeled test data and subsequently utilize these predictions to optimize the merged model, where $x_i$ is the $i$-th unlabeled sample, $p(\hat{y}_c|x_i)$ is the predicted probability of the $c$-th class, and $C$ is the number of classes.
\begin{equation}
  \mathcal{L}_{\text{entropy}}
  = \mathbb{E}[-\log p(\hat{y}|x)]
  = -\frac{1}{n}\sum_{i=1}^n \sum_{j=1}^C p(\hat{y}_c|x_i) \log p(\hat{y}_c|x),
  \label{tta}
\end{equation}
\vspace{-0.4cm}
\begin{figure}[t]
  \centering
  \setlength{\abovecaptionskip}{0cm}   
  \setlength{\belowcaptionskip}{0cm}   
  \includegraphics[width=0.8\linewidth]{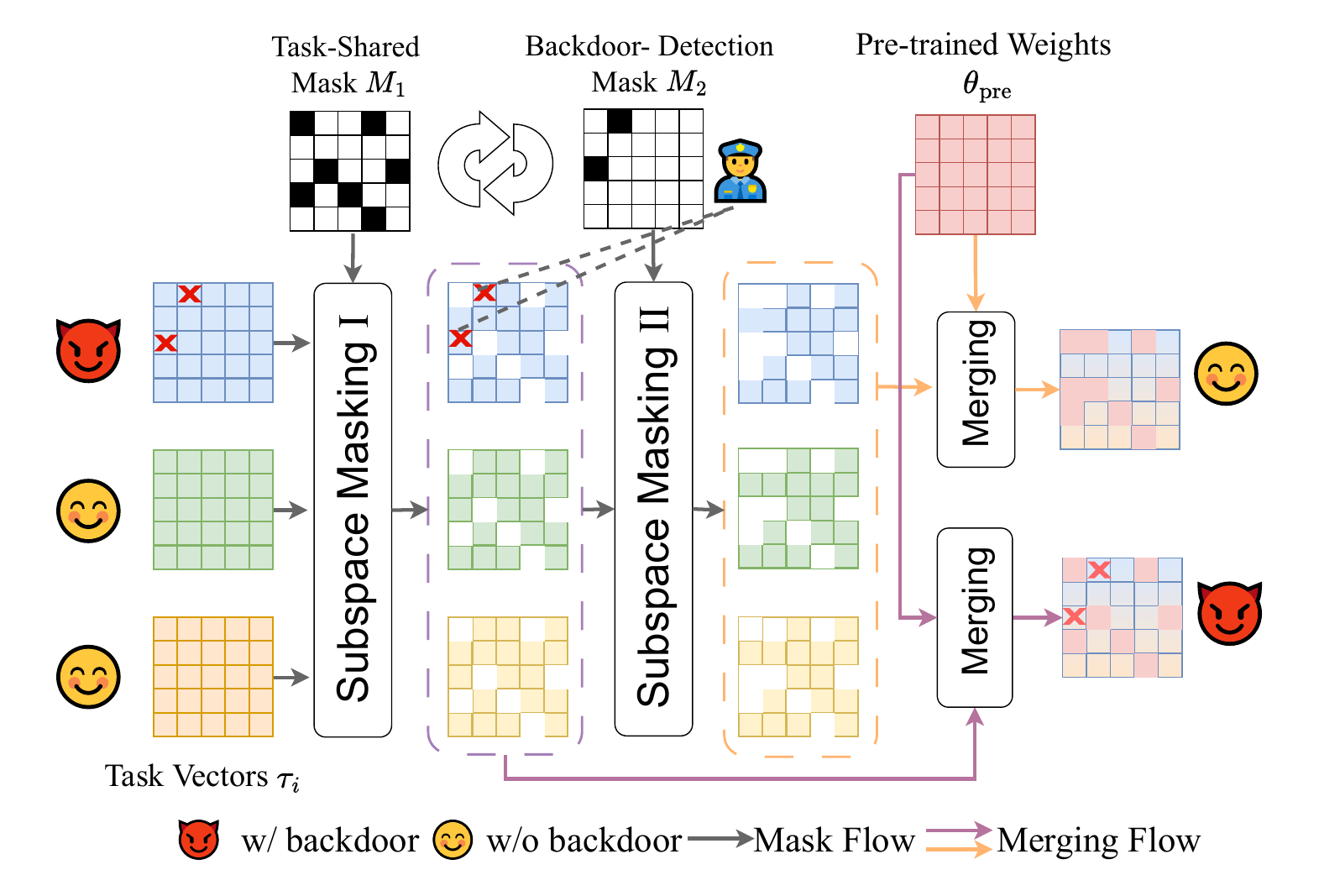}
  \caption{Illustrations of Defense-Aware Merging(DAM), where the Task-Shared mask and Backdoor-Detection mask are respectively used to mitigate the interference issues existing in the task-shared parameters among models and the safety issues existing in the task-specific parameters from the backdoored models.}
  \label{DAM_method}
\end{figure}


\textbf{The Outer-Level Optimization} aims to find a shared and safety-aware subspace across different task vectors $\{\tau_{j}\}_{j=1}^n$ that minimizes the loss of the merged model across clean and perturbed test data. Specifically, this can be achieved through a dual-mask optimization strategy. The first term refers to the update of $M_{1}$ on the clean test data, aiming to improve merged model performance on multi-tasks. In contrast, the second term shows that, based on $M_1$, $M_2$ is additionally designed to lower the weight related to identified triggers to mitigate the backdoor, assisted by the synthesized perturbation in the inner-level optimization. 

We can verify the reasonability of this dual-mask strategy from two aspects: (i) From the perspective of \emph{Model Merging}, the design of $M_1$ assumes that the interference among tasks is the key to influencing merged model performance and it usually exists in the shared parameter space among different individual finetuned models \citep{tang2023concrete}. As shown in Figure \ref{DAM_method}: The $M_1$ is designed to identify the shared parameter space among different task vectors. Then, the task-specific parameters can be separated after Subspace Masking I. Following the purple Merging Flow, the parameter of the final merged model includes two parts: the merged version of these task-specific parameters and the parameter from the pre-trained model corresponding to the shared mask $M_1$ region.  However, though it has been verified that revising the task vectors through mask designs similar to $M_1$ can improve the merged model performance clearly \citep{tang2024fusionbench}, the backdoor will still succeed as shown in the outcome of the purple Merging Flow. That's because the backdoor-related parameters (the red cross) may exist in the task-specific parameter space (blue part). Thus, we propose to utilize an additional mask $M_2$ to identify these backdoor-related parameters and then integrate with $M_1$ to acquire the shared and safety-aware subspace. After the Subspace Masking II, following the yellow Merging Flow, the backdoor effect from fine-tuned models can be mitigated by replacing the backdoor-related parameters with pre-trained weights. But notably, the introduced $M_2$ will also lose part of the useful task-specific parameter (the green and yellow part existing in the represented region of $M_2$), leading to a decrease in model performance. Thus, \emph{the alternate optimization} of two masks can be seen as seeking a good trade-off between performance and safety during merging; (ii) From the perspective of \emph{Pareto Optimal Balance between performance and safety} during the dual mask optimization, we propose Theorem \ref{theorem:existence_of_pareto_front}.

\begin{theorem}[Existence of Pareto front]
  \label{theorem:existence_of_pareto_front}
  Let $P(\mathbf{M})$ and $S(\mathbf{M})$ denote the performance and safety measures of the merged model under mask $\mathbf{M} = (M_1, M_2)$, respectively.
  There exists a Pareto front $\mathcal{F}$ such that:
  \begin{equation}
    \mathcal{F} = \left\{
    (\mathbf{M}, P(\mathbf{M}), S(\mathbf{M}))
    \,\middle| \, \nexists \mathbf{M}' : P(\mathbf{M}') > P(\mathbf{M}) \land S(\mathbf{M}') > S(\mathbf{M})
    \right\}.
  \end{equation}
\end{theorem}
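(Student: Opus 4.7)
The plan is to treat this as a standard multi-objective optimization existence statement and show that the set $\mathcal{F}$ is non-empty via a scalarization argument. First I would pin down the feasible mask set $\mathcal{M}$: either the discrete cube $\{0,1\}^d \times \{0,1\}^d$, or, following the soft-mask relaxation actually used in DAM, the continuous cube $[0,1]^d \times [0,1]^d$. In either case $\mathcal{M}$ is compact (finite, or closed and bounded in $\mathbb{R}^{2d}$), and the objective map $\Phi(\mathbf{M}) = (P(\mathbf{M}), S(\mathbf{M}))$ takes values in the bounded square $[0,1]^2$, since both accuracy and any reasonable safety measure such as $1 - \mathrm{ASR}$ lie in $[0,1]$.

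Next I would establish the regularity of $\Phi$ required for existence. Both coordinates are obtained by composing the continuous parameter map $\mathbf{M} \mapsto \theta_{\text{pre}} + \sum_{i} \lambda_i^*(\mathbf{M})(M_1 \odot M_2 \odot \tau_i)$ with the neural network and an expectation over the test set. In the discrete case $\mathcal{M}$ is finite and continuity is automatic; in the continuous case I would either work with the entropy-style surrogate actually optimized in Equation \ref{our algorithm_equation} (which is continuous in $\mathbf{M}$) or invoke upper semicontinuity of raw accuracy and $1 - \mathrm{ASR}$, both of which suffice for what follows.

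The final step is the standard scalarization argument for Pareto existence. Fix any weights $w_1, w_2 > 0$ and consider the scalar objective $F(\mathbf{M}) = w_1 P(\mathbf{M}) + w_2 S(\mathbf{M})$. By compactness of $\mathcal{M}$ together with (upper semi)continuity of $F$, there exists $\mathbf{M}^* \in \mathcal{M}$ attaining $\max_{\mathcal{M}} F$. Any such maximizer is non-dominated: if some $\mathbf{M}'$ satisfied $P(\mathbf{M}') > P(\mathbf{M}^*)$ and $S(\mathbf{M}') > S(\mathbf{M}^*)$ simultaneously, then $F(\mathbf{M}') > F(\mathbf{M}^*)$, contradicting optimality. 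Hence $\mathbf{M}^* \in \mathcal{F}$, and in particular $\mathcal{F}$ is non-empty.

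The main obstacle is the regularity step rather than the abstract optimization: raw test-set accuracy and attack success rate are piecewise constant in the parameters and only upper semicontinuous because of the $\arg\max$ at the classifier head, and the inner map $\lambda^*(M_1)$ itself depends on an argmin that need not be unique. I would therefore either state the existence result in the correct generality (upper semicontinuous functions attain their suprema on compact sets, and $\lambda^*$ can be replaced by any measurable selector from the argmin correspondence), or, more cleanly, replace $P$ and $S$ by the continuous entropy surrogate of Equation \ref{our algorithm_equation}, which reduces the argument to the Weierstrass extreme value theorem. The rest of the proof is mechanical.
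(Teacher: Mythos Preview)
Your proposal is correct and takes a genuinely different route from the paper. The paper argues by contradiction: assuming $\mathcal{F}$ is empty, it builds an infinite strictly-improving sequence $\{\mathbf{M}_n\}$, invokes boundedness of $P(\cdot)$ and $S(\cdot)$ in $[0,1]$ to claim the sequence ``must converge to some limit point $\mathbf{M}^*$,'' and then notes that a further improvement over $\mathbf{M}^*$ contradicts $\mathbf{M}^*$ being a limit. Your approach instead goes through scalarization and Weierstrass: a maximizer of $w_1 P + w_2 S$ over the compact mask set is automatically non-dominated, so $\mathcal{F}$ is non-empty. Your version is the standard textbook argument and is more careful about hypotheses---you explicitly isolate compactness of the mask domain and (upper semi)continuity of $P$ and $S$ as the ingredients that make the maximum exist, and you flag the genuine subtleties (piecewise-constant accuracy, the inner $\lambda^*$ selector). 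The paper's argument, by contrast, uses only boundedness of the \emph{objective values}, which yields convergence of $(P(\mathbf{M}_n), S(\mathbf{M}_n))$ in $[0,1]^2$ but not of $\mathbf{M}_n$ in mask space, and never says how the limit passes through $P$ and $S$; it is shorter and more heuristic, implicitly leaning on exactly the regularity you chose to make explicit. In short, your scalarization route buys rigor and transparency about assumptions; the paper's contradiction route buys brevity at the cost of glossing over the same regularity step you identified as the main obstacle.
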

\vspace{-0.4cm}
The detailed proof can be shown in Appendix \ref{more_parato}. Moreover, we also provide the convergence analysis to discuss how DAM converges to a Pareto optimal solution to balance performance and safety.

\textbf{The Inner-Level Optimization} has two objects:(i) Find the optimal merging coefficient $\lambda$ that minimizes the loss of the merged model across different tasks on clean test data as Eq.\ref{tta}; (ii) Estimate the trigger pattern through synthesized adversarial perturbation $\Delta$, which can be used for learning the $M_2$ to identify the sensitive backdoor-related weight in outer-level optimization. Notably, during inner-level optimization, we first optimize the $\lambda$ to get $\lambda^{*}$ and then create the merged model for the second objects.

Especially, the synthesized unified perturbation can be used to identify the backdoor-related parameter without additional assumptions about the injected backdoor. That's because, as shown by Eq.\ref{similarity}, for each task data $D_i$, the adversarial samples from a backdoored model have similar features as the triggered images, but the ones from a clean model don't have this property \citep{wei2023shared,niu2024towards}. 
\begin{align}
f_{\text{clean}}(D_i + \Delta_i) \neq f_{\text{backdoor}}(D_i + \Delta_i) \approx f_{\text{backdoor}}(D_{i}^{\text{with trigger}})
\label{similarity}
\end{align}
At the same time, the adversarial examples produced based on the backdoored models come from arbitrary classes, usually exhibiting a uniform distribution in the embedding space.
Leveraging the embedding drift insight \citep{zeng2024beear} that backdoor triggers induce relatively uniform drifts in the model’s embedding space regardless of the trigger
location or attack mechanism, we can synthesize a unified perturbation to represent the misguided behavior change upon
trigger insertion for each task-specific model. Thus, the unknown backdoor injections can still be successfully approximated as a unified and synthesized perturbation.

\section{Experiments}



\subsection{Experimental Setup}

\textbf{Datasets and Models:} Following ~\citep{tang2024fusionbench}, we utilize CLIP-ViT-B/32 and CLIP-ViT-L/14 as our pre-trained models and conduct experiments on six image classification tasks including Stanford Cars~\citep{krause_3d_2013}, RESISC45~\citep{cheng_remote_2017}, EuroSAT~\citep{helber2018introducing}, SVHN~\citep{netzer_reading_2021},
MNIST~\citep{lecunGradientbasedLearningApplied1998} and DTD~\citep{cimpoi2014describing}. We first construct the clean individual fine-tuned models by directly fine-tuning the pre-trained model on these clean datasets and then inject them with the backdoor adopting TrojVit~\citep{zheng2023trojvit} and BadVit~\citep{yuan2023you} strategy to construct the backdoored models. More detailed descriptions of datasets and models can be shown in the Appendix \ref{backdoor_construction_appendix}.

\textbf{Baselines:}:\textbf{(i) Individual Finetuning:} All Clean Finetuned Models, All Backdoored Finetuned Models, and Mixing with Clean and Backdoored Finetuned Models under different settings. Notably, we just average their results for reference;\textbf{(ii) Multi-Task Merging Methods:} Weight Average~\citep{wortsman2022model}, Fisher merging~\citep{matena2022merging}, RegMean~\citep{jin2022dataless}, Task Arithmetic~\citep{ilharco2022editing}, Ties-Merging~\citep{yadav2024ties}, Adamerging~\citep{yang2023adamerging}, Concrete~\citep{tang2023concrete} \textbf{(iii) Post-Defense Methods} involving adversarial perturbation: ANP~\citep{wu2021adversarial}, AWM~\citep{chai2022one} and SAU~\citep{wei2023shared}). Notably, we execute these post-defense backdoor processing on the best-merged model adopting (ii), which can be seen as \emph{two-stage methods} compared with our \emph{end-to-end training process}. \textbf{(iv) Other backdoor-related merging works that are not designed for multi-task merging}: Both WAG \citep{arora2024here} and  LoRA-as-an-Attack \citep{liu2024lora} defend the backdoor by directly averaging the homogeneous clean and backdoored full model weights or LoRa \emph{on the same task} without other complex designs (e.g.subspace). Moreover, the BadMerging~\citep{zhang2024badmerging} is a newly proposed backdoor attack adapted to model merging. More detailed discussions can be shown in Appendix \ref{related_work_appendix}.


\begin{wrapfigure}{r}{0.5\textwidth}
    \vspace{-20pt}

    \begin{minipage}{0.5\textwidth}
        \begin{table}[H]
            \caption{\label{tab:data_compare}Necessary specifications for the implementation and properties of each method.}
            \setlength{\tabcolsep}{3pt}
            \centering
            \resizebox{\textwidth}{!}{
                \begin{tabular}{l|ccc|c|c}
                    \toprule
                    \multirow{2}{*}{\bf METHOD} & {\bf TRAINING-DATA} & \multicolumn{2}{c|}{\bf VALID-DATA TUNING} & {\bf SAFETY-AWARE} & {\bf POST-HOC}
                    \\
                                                & \textbf{TUNING}     & \textbf{INPUTS}                            & \textbf{LABLES}    & \textbf{TRAINING} & \textbf{COST}
                    \\

                    \midrule
                    {Weight Average}            & $\times$            & $\times$                                   & $\times$           & $\times$          & $\times$      \\
                    {Fisher-Merging}~\cite{}    & $\times$            & $\checkmark$                               & $\times$           & $\times$          & $\times$      \\
                    {RegMean}~\cite{}
                                                & $\times$            & $\checkmark$                               & $\times$           & $\times$          & $\times$      \\
                    Task Arithmetic~\cite{}
                                                & $\times$            & $\checkmark$                               & $\checkmark$       & $\times$          & $\times$      \\
                    {Ties-Merging}~\cite{}      & $\times$            & $\checkmark$                               & $\checkmark$       & $\times$          & $\times$      \\
                    AdaMerging~\cite{}          & $\times$            & $\checkmark$                               & $\times$           & $\times$          & $\times$      \\
                    Concrete~\cite{}            & $\times$            & $\checkmark$                               & $\times$           & $\times$          & $\times$      \\
                    \midrule
                    ANP~\cite{}                 & $\times$            & $\checkmark$                               & $\checkmark$       & $\checkmark$      & $\checkmark$  \\
                    AWM~\cite{}                 & $\times$            & $\checkmark$                               & $\checkmark$       & $\checkmark$      & $\checkmark$  \\
                    SAU~\cite{}                 & $\times$            & $\checkmark$                               & $\checkmark$       & $\checkmark$      & $\checkmark$  \\
                    \midrule
                    \textbf{DAM(Ours)}
                                                & $\times$            & $\checkmark$                               & $\times$           & $\checkmark$      & $\times$      \\
                \bottomrule
                \end{tabular}
            }
        \end{table}
    \end{minipage}
\end{wrapfigure}
\textbf{Evaluation Metric:} We respectively adopt the top-1 accuracy(ACC) on clean test data as a performance metric and the attack success rate(ASR) on test data with trigger as a safety metric ~\citep{wu2022backdoorbench}. An ideal model should have high ACC but low ASR. To further explore the backdoor effect, apart from the average ACC and ASR on the full six tasks, we also present the average results on tasks related to backdoored task-specific models, including ACC(2)/ASR(2) and ACC(4)/ASR(4), with the numbers indicating the count of backdoored models.

\textbf{Multi-Task Merging Settings Considering the Existence of Backdoor.} To help understand our contribution, we provide a clear overview of existing merging methods and potential post-defense solutions that can address the backdoor issues during multi-task merging. Detailed information about the implementation and properties of methods can be shown in Table \ref{tab:data_compare}.
In short, we provide a cost-effective and safety-aware merging method to mitigate the neglected backdoor issues during multi-task merging.

\subsection{Experimental Results}



\emph{DAM can outperform state-of-art multi-task merging methods to achieve a better trade-off between performance and safety.} We respectively conduct multi-task model merging experiments on CLIP-ViT-B/32 and CLIP-Vit-L/14, where exists two backdoored task-specific models and four clean models. The obtained results can be shown in Table \ref{two_backdoor_resisc45} and Table \ref{two_backdoor_resisc45_large}. We can observe that
DAM can achieve lower ASR with a minor sacrifice of ACC compared with the SOTA merging method, Concrete AM(layer-wise). More multi-task merging experiments adopting two backdoored models can be shown in Appendix \ref{experiments_appendix}, which can further support our conclusion and verify the effectiveness of our proposed DAM.

\emph{DAM can achieve comparable or better effects in addressing the backdoor issues without additional training compared with post-defense methods.} Through the comparison experiments among Concrete AM (layer-wise), post-defense methods (ANP, AWM, and SAU), and DAM shown in Table \ref{two_backdoor_resisc45} and Table \ref{two_backdoor_resisc45_large}, we observe that though previous post-defense methods can mitigate the backdoor issues on the SOTA merged model in a way, they are still clearly worse than DAM. This can be attributed to their reliance on high-quality labeled data \citep{wu2022backdoorbench}, which is usually unrealistic during merging. Additionally, their operated target is a merged model that has converged solely based on performance, making it difficult to achieve a balance between performance and safety. Notably, as the efficiency studies of Table \ref{efficiency} show in the Appendix \ref{efficiency_studies}, these post-defense methods introduce additional training costs but DAM can naturally cope with the backdoor issues in an end-to-end training manner without this constraint.

\begin{table}[t]
    \setlength{\abovecaptionskip}{0cm}
    \setlength{\belowcaptionskip}{0cm}
    \renewcommand{\arraystretch}{1}
    \caption{Results of multi-task merging while adopting two models attacked by TrojVit (CLIP-ViT-B/32, ACC$\uparrow$/ASR$\downarrow$). We highlight the best average score in bold and the second score with underline.}
    \centering
    \resizebox{0.95\textwidth}{!}{
        \begin{tabular}{c|cc|cc|cc|cc|cc|cc|cccc}
            \toprule
            \multirow{2}{*}{\textbf{METHOD}} & \multicolumn{2}{c|}{\bf MNIST(clean)} & \multicolumn{2}{c|}{\bf Cars(clean)} & \multicolumn{2}{c|}{\bf RESISC45(backdoor)} & \multicolumn{2}{c|}{\bf EuroSAT(backdoor)} & \multicolumn{2}{c|}{\bf SVHN(clean)} & \multicolumn{2}{c|}{\bf DTD(clean)} & \multicolumn{4}{c}{\textbf{AVG}}                                                                                                                                                            \\
                                             & \textbf{ACC}                          & \textbf{ASR}                         & \textbf{ACC}                                & \textbf{ASR}                               & \textbf{ACC}                         & \textbf{ASR}                        & \textbf{ACC}                     & \textbf{ASR} & \textbf{ACC} & \textbf{ASR} & \textbf{ACC} & \textbf{ASR} & \textbf{ACC(2)}   & \textbf{ACC(6)}   & \textbf{ASR(2)}   & \textbf{ASR(6)}   \\
            \midrule
            Individual(All Clean)            & 99.60                                 & 20.78                                & 78.14                                       & 2.39                                       & 95.30                                & 42.11                               & 99.07                            & 24.37        & 96.30        & 36.24        & 78.72        & 22.77        & 97.19             & 91.19             & 33.24             & 24.78             \\
            Individual(All Backdoor)         & 97.47                                 & 98.56                                & 64.00                                       & 99.63                                      & 89.00                                & 98.37                               & 94.85                            & 100.00       & 84.86        & 89.67        & 61.50        & 98.90        & 91.93             & 81.95             & 99.19             & 97.52             \\
            Individual(Two Backdoor) &99.60 & 20.78 & 78.14 & 2.39 & 89.00 & 98.37 & 94.85 & 100.00 & 96.30 & 36.24 & 78.72 & 22.77 & 91.93 & 89.44 & 99.19 & 46.76 \\
            \midrule
            Weight Average                   & 89.51                                 & 36.10                                & 63.35                                       & 1.37                                       & 73.00                                & 86.16                               & 85.07                            & 100.00       & 66.04        & 73.15        & 51.76        & 26.12        & 79.04             & 71.46             & 93.08             & 53.82             \\
            Fisher Merging                   & 97.18                                 & 25.18                                & 65.58                                       & 1.64                                       & 74.56                                & 24.54                               & 82.48                            & 94.63        & 78.38        & 62.16        & 59.73        & 25.21        & 78.52             & 76.32             & \textbf{59.59}    & 38.89             \\
            RegMean                          & 98.37                                 & 19.45                                & 69.11                                       & 2.42                                       & 73.89                                & 58.86                               & 93.08                            & 97.04        & 92.94        & 42.74        & 66.06        & 31.22        & 83.49             & 82.24             & 77.95             & 41.96             \\
            Task Arithmetic                  & 91.30                                 & 35.67                                & 63.60                                       & 1.34                                       & 74.94                                & 85.65                               & 83.30                            & 100.00       & 67.76        & 73.12        & 52.82        & 25.64        & 79.12             & 72.29             & 92.83             & 53.57             \\
            Ties-Merging                     & 98.04                                 & 16.25                                & 63.64                                       & 0.40                                       & 66.75                                & 98.76                               & 78.33                            & 100.00       & 85.06        & 49.59        & 55.85        & 19.26        & 72.54             & 74.61             & 99.38             & 47.38             \\
            AdaMerging(Task-Wise)            & 97.83                                 & 18.46                                & 63.80                                       & 1.01                                       & 67.98                                & 98.95                               & 63.56                            & 100.00       & 77.79        & 57.37        & 62.98        & 21.28        & 65.77             & 72.32             & 99.48             & 49.51             \\
            AdaMerging(Layer-Wise)           & 98.19                                 & 13.51                                & 73.73                                       & 0.97                                       & 86.90                                & 54.87                               & 91.33                            & 91.51        & 92.49        & 38.21        & 70.16        & 23.40        & 89.12             & 85.47             & 73.19             & 37.08             \\
            Concrete TA                      & 98.43                                 & 16.37                                & 61.62                                       & 0.34                                       & 79.59                                & 95.95                               & 93.85                            & 100.00       & 91.20        & 44.30        & 51.86        & 12.98        & 86.72             & 79.43             & 97.98             & 44.99             \\
            Concrete AM (Task-Wise)          & 98.11                                 & 13.39                                & 68.82                                       & 0.86                                       & 80.24                                & 98.10                               & 63.33                            & 99.93        & 92.05        & 45.72        & 65.00        & 17.07        & 71.79             & 77.93             & 99.02             & 45.85             \\
            Concrete AM  (Layer-Wise)        & 98.43                                 & 16.36                                & 74.95                                       & 1.02                                       & 87.20                                & 45.98                               & 90.15                            & 95.67        & 93.00        & 44.74        & 71.86        & 22.18        & 88.68             & \textbf{85.93} & 70.83             & 37.66             \\
            \midrule
            ANP                              & 98.24                                 & 13.60                                & 74.24                                       & 0.98                                       & 87.05                                & 54.80                               & 91.85                            & 91.59        & 92.60        & 39.50        & 70.27        & 23.19        & 89.45             & 85.71             & 73.20             & 37.28             \\
            AWM                               & 98.27 & 13.61 & 74.23 & 1.02 & 87.11 & 54.93 & 91.78 & 91.63 & 92.62 & 38.70 & 70.37 & 23.03 & 89.45 & \underline{85.73} & 73.28 & 37.15            \\
            SAU                              & 98.24 & 13.57 & 74.13 & 0.96 & 87.06 & 54.24 & 91.85 & 91.67 & 92.60 & 39.03 & 70.21 & 23.19 & \underline{89.46}& 85.68 & 72.96 & \underline{37.11} \\
            \midrule
            \textbf{DAM(Ours)}               & 98.93                                 & 14.64                                & 69.26                                       & 0.90                                       & 88.34                                & 43.38                               & 91.19                            & 85.56        & 92.69        & 47.13        & 70.96        & 22.98        & \textbf{89.77} & 85.23             & \underline{64.47} & \textbf{35.77}    \\
        \bottomrule
        \end{tabular}
    }
    \label{two_backdoor_resisc45}
\end{table}

\begin{table}[]
    \setlength{\abovecaptionskip}{0cm}
    \setlength{\belowcaptionskip}{0cm}
    \renewcommand{\arraystretch}{1}
    \caption{Results of multi-task merging while adopting four models attacked by TrojVit (CLIP-ViT-B/32, ACC$\uparrow$/ASR$\downarrow$).  We highlight the best average score in bold and the second score with underline.}
    \centering
    \resizebox{\textwidth}{!}{
        \begin{tabular}{c|cc|cc|cc|cc|cc|cc|cccc}
            \toprule
            \multirow{2}{*}{\textbf{METHOD}} & \multicolumn{2}{c|}{\bf MNIST(backdoor)} & \multicolumn{2}{c|}{\bf Cars(backdoor)} & \multicolumn{2}{c|}{\bf RESISC45(backdoor)} & \multicolumn{2}{c|}{\bf EuroSAT(backdoor)} & \multicolumn{2}{c|}{\bf SVHN(clean)} & \multicolumn{2}{c|}{\bf DTD(clean)} & \multicolumn{4}{c}{\textbf{AVG}}                                                                                                                                                            \\
                                             & \textbf{ACC}                             & \textbf{ASR}                            & \textbf{ACC}                                & \textbf{ASR}                               & \textbf{ACC}                         & \textbf{ASR}                        & \textbf{ACC}                     & \textbf{ASR} & \textbf{ACC} & \textbf{ASR} & \textbf{ACC} & \textbf{ASR} & \textbf{ACC(4)}   & \textbf{ACC(6)}   & \textbf{ASR(4)}   & \textbf{ASR(6)}   \\
            \midrule
            Individual(All Clean)            & 99.60                                    & 20.78                                   & 78.14                                       & 2.39                                       & 95.30                                & 42.11                               & 99.07                            & 24.37        & 96.30        & 36.24        & 78.72        & 22.77        & 93.03             & 91.19             & 22.41             & 24.78             \\
            Individual(All Backdoor)         & 97.47                                    & 98.56                                   & 64.00                                       & 99.63                                      & 89.00                                & 98.37                               & 94.85                            & 100.00       & 84.86        & 89.67        & 61.50        & 98.90        & 86.33             & 81.95             & 99.14             & 97.52             \\
            Individual(Four Backdoor) & 97.47 & 98.56 & 64.00 & 99.63 & 89.00 & 98.37 & 94.85 & 100.00 & 96.30 & 36.24 & 78.72 & 22.77 & 86.33 & 86.72 & 99.14 & 75.93 \\
            \midrule
            Weight Average                   & 91.87                                    & 68.17                                   & 62.17                                       & 53.81                                      & 74.24                                & 80.16                               & 82.85                            & 100.00       & 67.67        & 89.11        & 52.82        & 22.02        & 77.78             & 71.94             & 75.54             & 68.88             \\
            Fisher Merging                   & 92.88                                    & 77.04                                   & 49.87                                       & 77.95                                      & 71.70                                & 17.13                               & 86.11                            & 98.22        & 78.68        & 87.30        & 60.59        & 27.18        & 75.14             & 73.31             & 67.59             & 64.14             \\
            RegMean                          & 97.32                                    & 67.03                                   & 64.71                                       & 80.90                                      & 75.86                                & 52.44                               & 93.70                            & 97.85        & 91.34        & 69.71        & 65.96        & 26.91        & 82.90             & 81.48             & 74.56             & 65.81             \\
            Task Arithmetic                  & 91.90                                    & 67.86                                   & 61.99                                       & 53.25                                      & 74.27                                & 79.65                               & 83.11                            & 100.00       & 67.49        & 88.96        & 52.87        & 21.60        & 77.82             & 71.94             & 75.19             & 68.55             \\
            Ties-Merging                     & 97.31                                    & 81.70                                   & 58.46                                       & 90.54                                      & 64.97                                & 97.41                               & 76.85                            & 100.00       & 80.95        & 89.07        & 52.66        & 3.83         & 74.40             & 71.87             & 92.41             & 77.09             \\
            AdaMerging(Task-Wise)            & 97.58                                    & 72.63                                   & 51.65                                       & 24.65                                      & 69.90                                & 97.79                               & 63.48                            & 100.00       & 72.95        & 89.43        & 61.82        & 16.70        & 70.65             & 69.56             & 73.77             & 66.87             \\
            AdaMerging(Layer-Wise)           & 98.30                                    & 19.66                                   & 73.62                                       & 2.15                                       & 87.24                                & 52.32                               & 91.26                            & 92.26        & 92.45        & 54.34        & 70.74        & 23.44        & 87.61             & 85.60             & 41.60             & 40.70             \\
            Concrete TA                      & 98.39                                    & 33.94                                   & 58.13                                       & 55.07                                      & 77.40                                & 94.97                               & 93.89                            & 94.97        & 90.95        & 60.98        & 49.89        & 9.10         & 81.95             & 78.11             & 69.74             & 58.17             \\
            Concrete AM (Task-Wise)          & 97.97                                    & 31.65                                   & 50.03                                       & 23.08                                      & 81.27                                & 97.81                               & 64.19                            & 99.93        & 92.73        & 58.97        & 69.52        & 14.47        & 73.37             & 75.95             & 63.12             & 54.32             \\
            Concrete AM  (Layer-Wise)        & 98.21                                    & 25.32                                   & 73.81                                       & 2.34                                       & 88.56                                & 61.27                               & 93.85                            & 89.11        & 92.79        & 68.57        & 72.34        & 22.34        & \textbf{88.61}    & \textbf{86.59}    & 44.51             & 44.83             \\
            \midrule
            ANP                              & 98.33                                    & 19.82                                   & 74.01                                       & 2.21                                       & 87.37                                & 52.51                               & 91.67                            & 92.48        & 92.57        & 54.87        & 70.80        & 23.35        & \underline{87.85} & 85.79             & 41.76             & 40.87             \\
            AWM                              & 98.33                                    & 19.77                                   & 73.95                                       & 2.20                                       & 87.33                                & 52.33                               & 91.67                            & 92.22        & 92.54        & 54.80        & 70.75        & 23.19        & 87.82             & 85.76             & 41.63             & 40.75             \\
            SAU                              & 98.29                                    & 19.51                                   & 74.33                                       & 2.46                                       & 87.21                                & 39.94                               & 91.33                            & 93.74        & 92.79        & 59.67        & 71.76        & 21.76        & 87.79             & \underline{85.95} & \underline{38.91} & \underline{39.51} \\
            \midrule
            \textbf{DAM(Ours)}               & 98.47                                    & 18.53                                   & 74.39                                       & 2.39                                       & 87.13                                & 32.26                               & 91.04                            & 82.63        & 92.62        & 57.64        & 71.49        & 22.92        & 87.76             & 85.86             & \textbf{33.95}    & \textbf{36.06}    \\
        \bottomrule
        \end{tabular}
    }
    \label{four_backdoor}
    \vspace{-0.5cm}
\end{table}

\emph{DAM can achieve robust results during multi-task merging adapted to the different numbers of backdoored models and different types of backdoors.} First, as shown in Table \ref{four_backdoor}, apart from using two backdoored models during merging, we merge four backdoored and two clean models to achieve a multi-task model. The reported results consistently show that DAM can mitigate the backdoors better, owning 11 scores decrease on the four tasks related to the backdoor models(ASR(4)) and nearly 9 score decrease on the full tasks(ASR(6), compared with previous SOTA merging methods. Simultaneously, the ACC of DAM decreases minorly within 1 score on average which can be accepted in a way. Besides,  DAM consistently yields superior results than post-defense methods. Then, we additionally introduce another backdoor attack called BadVit ~\citep{yuan2023you} to explore the robustness of DAM.  The results of Table \ref{two_types_backdoor} illustrate that the backdoor types have little impact on the effect of DAM, which still can outperform existing multi-task merging methods and post-defense methods. These results can further verify the effectiveness and robustness of DAM.

\emph{DAM can achieve better results than other backdoor-related merging methods that are not designed for multi-task merging scenarios or successfully defend their proposed backdoor attack for model merging.} To distinguish WAG \citep{arora2024here} and  LoRA-as-an-Attack \citep{liu2024lora} from our proposed DAM, as shown in Table \ref{backdoor_average}, for each task related to the backdoored individual finetuned model, we additionally select a clean model for this task during multi-task merging. This corresponds to the open-source community scenario, where exists many models for the same task, and part of them are injected with backdoors but we can not know in advance. We can observe that DAM consistently achieves higher ACC and lower ASR in different settings. For the backdoor attack called BadMering \citep{arora2024here}, we mainly utilize it to attack the merged model of Table \ref{two_backdoor_resisc45} and Table \ref{four_backdoor} adopting the DAM strategy. Combining DAM with BadMerging means we would like to check whether BadMerging can further inject backdoor-related parameters that DAM can not identify, further increasing the ASR badly. The results of Table \ref{backdoor_attack_merging} show that it's difficult to clearly increase the ASR adopting the attack proposed by BadMerging. In other words, DAM can successfully defend this new backdoor attack, further verifying its effectiveness in addressing the backdoor issues.
\vspace{-0.2cm}

\begin{table}[H]
    \centering
    \setlength{\abovecaptionskip}{0pt} 
    \setlength{\belowcaptionskip}{0pt} 
    \begin{minipage}{0.5\textwidth}
        \centering
        \caption{Comparison with other backdoor-defense methods that are not designed for multi-task scenarios.}
        \resizebox{\textwidth}{!}{
            \begin{tabular}{c|l|cc}
                \midrule
                \textbf{SETTINGS}                            & \textbf{METHODS}      & \textbf{ACC}$\uparrow$ & \textbf{ASR}$\downarrow$ \\
                \midrule
                \multirow{2}{*}{(2backdoor+2clean)+4 clean } & LoRA-as-an-Attack/WAG & 83.81                  & 29.51                    \\
                                                             & \textbf{DAM(Ours)}    & \textbf{85.79}         & \textbf{24.52}           \\
                \midrule
                \multirow{2}{*}{(4backdoor+4clean)+2 clean}  & LoRA-as-an-Attack/WAG & 81.11                  & 33.58                    \\
                                                             & \textbf{DAM(Ours)}    & \textbf{86.81}         & \textbf{27.11}           \\
                \midrule
            \end{tabular}
        }
        \label{backdoor_average}
    \end{minipage}%
    \hspace{0.03\textwidth} 
    \begin{minipage}{0.45\textwidth}
        \centering
        \caption{The game between the latest backdoor attack for merging (BadMerging) and our proposed backdoor-defense merging (DAM).}
        \resizebox{\textwidth}{!}{
            \begin{tabular}{c|l|cc}
                \midrule
                \textbf{SETTINGS}                   & \textbf{METHODS} & \textbf{ACC}$\uparrow$ & \textbf{ASR}$\downarrow$ \\
                \midrule
                \multirow{2}{*}{2 backdoor+4 clean} & DAM              & 85.23                  & 35.77                    \\
                                                    & DAM+BadMerging   & 84.78                  & 36.35                    \\
                \midrule
                \multirow{2}{*}{4 backdoor+2 clean} & DAM              & 85.86                  & 36.06                    \\
                                                    & DAM+BadMerging   & 85.44                  & 37.11                    \\
                \midrule
            \end{tabular}
        }
        \label{backdoor_attack_merging}
    \end{minipage}
    \vspace{-0.5cm}
\end{table}

\begin{wrapfigure}{r}{0.45\textwidth} 
    \setlength{\abovecaptionskip}{0pt} 
    \setlength{\belowcaptionskip}{0pt} 
    
    \renewcommand{\arraystretch}{1}
    \begin{minipage}{0.45\textwidth}
        \begin{table}[H]
            \caption{Ablation studies for the masks of DAM.}
            \centering
            \resizebox{\textwidth}{!}{
                \begin{tabular}{c|cl|cc}
                    \toprule
                    \multirow{2}{*}{\textbf{SETTINGS}}         & \multicolumn{2}{c|}{MASK} & \multirow{2}{*}{ACC$\uparrow$} & \multirow{2}{*}{ASR$\downarrow$}                  \\
                                                       & M1                        & M2                             &                                  &                \\ \midrule
                    \multirow{4}{*}{2 backdoor+4clean} & $\checkmark$              & $\checkmark$                   & 85.23                            & \textbf{35.77} \\
                                                       & $\times$                  & $\checkmark$                   & 80.25                            & 37.04          \\
                                                       & $\checkmark$              & $\times$                       & 85.93                            & 37.66          \\
                                                       & $\times$                  & $\times$                       & 85.47                            & 37.08          \\ \midrule
                    \multirow{4}{*}{4backdoor+2clean}  & $\checkmark$              & $\checkmark$                   & 85.86                            & \textbf{36.06} \\
                                                       & $\times$                  & $\checkmark$                   & 80.15                            & 40.12          \\
                                                       & $\checkmark$              & $\times$                       & 86.59                            & 44.83          \\
                                                       & $\times$                  & $\times$                       & 85.61                            & 40.71 \\
                \bottomrule
                \end{tabular}
            }
            \label{ablations_mask}
        \end{table}
    \end{minipage}
\end{wrapfigure}

\emph{Two masks have different effects on DAM and collectively promote the merged model with a good trade-off between safety and performance.} The experimental settings are consistent with Table \ref{two_backdoor_resisc45} and Table \ref{four_backdoor}. As shown in Eq.\ref{our algorithm_equation}, removing the $M_{2}$ of DAM means we only focus on the interference of multi-task merging, which is similar to Concrete \citep{tang2023concrete}. In contrast, removing the $M_{2}$ of DAM means we mainly deal with the backdoor issues, which can be achieved by adopting AdaMerging \citep{yang2023adamerging} on the perturbated data. Then, removing two masks together means we only focus on learning the merging coefficients without revising the task vectors. The results shown in Table \ref{ablations_mask} can collectively verify the effectiveness of the dual-mask optimization of DAM.

\vspace{-0.2cm}
\section{Conclusion}
\vspace{-0.2cm}
This paper conducts extensive experiments to explore the backdoor effect for multi-task merging, uncovering two neglected but important phenomena: backdoor succession and backdoor transfer. To address these challenges, we propose a novel Defense-Aware Merging (DAM) algorithm through dual mask optimization to identify a shared and safety-aware subspace so as to mitigate interference and backdoor issues for multi-task merging. Extensive experiments on several benchmarks can verify the effectiveness and robustness of DAM. Finally, we hope this study can draw attention to the safety issues of model merging across more scenarios.
\section*{ACKNOWLEDGMENTS}
We sincerely thank the anonymous reviewers for their valuable feedback, which has helped us polish the paper. This work is supported by STI 2030—Major Projects (No. 2021ZD0201405), Shenzhen Basic Research Project (Natural Science Foundation) Basic Research Key Project (NO. JCYJ20241202124430041),  the Major Project in Judicial Research from Supreme People's Court (NO. GFZDKT2024C08-3), and Huawei AI 100 Schools Program.

\bibliography{iclr2025_conference}
\bibliographystyle{iclr2025_conference}

\clearpage
%

\appendix




\startcontents[sections]  
\printcontents[sections]{}{1}{\setcounter{tocdepth}{2}}  
\vskip 0.2in
\hrule 



\section{Related Work}\label{related_work_appendix}
\subsection{Multi-Task Model Merging}
Multitask model merging aims to provide cost-effective parameter fusion strategies to integrate multiple task-specific finetuned models from the shared pre-trained model into a unified one that can handle various tasks \citep{tang2024fusionbench,yang2024model}.
The well-known strategy to merge multiple task-specific models is by performing element-wise interpolation on weights, such as Weight Average \citep{wortsman2022model,kaddour2022stop,chronopoulou2023adaptersoup,sanyal2023understanding,lawson2024merging,jia2024revisiting}, Fisher merging \citep{matena2022merging,ryu2023fisher,nathan2024fisher}, RegMean \citep{jin2022dataless}, and Task Arithmetic \citep{ilharco2022editing,zhang2023composing,ni2023forgetting,ortiz2024task}. To further enhance the effectiveness of merging, many efforts have been devoted to addressing interference among tasks and proposing gradient-conflict-based \citep{yadav2024ties}, representation-based \citep{yang2023adamerging,yang2024representation}, routing-based \citep{zhao2024loraretriever,zhao2024merging,tang2024smile,tang2024towards}, and subspace-based methods \citep{tang2023concrete,tam2024merging,yu2024language}. 
Unfortunately, existing model merging studies enhance knowledge transfer but neglect adversarial backdoor propagation through parameter fusion, with insufficient analysis of multi-task system vulnerabilities.



\subsection{Model Merging Considering the Safety}
The safety of machine learning algorithms is the key to their widespread applications \citep{wu2022backdoorbench}. Recently, many researchers have begun to emphasize safety concerns associated with merging scenarios \citep{yang2024model}. For example, some works adopt the subspace-based and data-aware merging methods to deal with the misalignment \citep{yi2024safety,hammoud2024model} and Intellectual Property Protection \citep{cong2024have} problems for large language models (LLMs). Consistent with our work that focuses on the backdoor issues, WAG \citep{arora2024here} and LoRA-as-an-Attack \citep{liu2024lora} conducted preliminary exploration under backdoor defense scenarios. However, they only adopt the initial weight average strategy \citep{wortsman2022model}, to cope with the backdoor issues on the same task without more fine-grained designs(e.g. subspace). Besides, BadMerging \citep{zhang2024badmerging} only focuses on executing effective backdoor attacks to maximize the ASR to break the safeguard for multi-task merging methods, while we mainly consider the defense strategy to minimize the ASR during merging. To the best of our knowledge, our work is the first to conduct extensive experiments to investigate the backdoor effect for multi-task merging scenarios and provide a novel defense-aware merging algorithm to alleviate this problem.

\section{More Details about the Method}\label{method_appendix}
\subsection{The Implementation of the Mask Processing.}
Ideally,  both $M_1$ and $M_2$ should be optimized alone, but to make the solution easy, we let the $M_1= M_2$ during the implementation. Then, this single mask can be optimized with two different losses as Eq. \ref{eq:scalarized_moop} when given the initial sampling distribution. Exactly, there are many mask sampling techniques to process the neuron or parameters, we adopt the concrete mask strategy \citep{tang2023concrete} as Eq. \ref{eq:concrete_mask_sampling} to revise the task vector $\tau_i$, where $\mathbf{m}$ is a $d$-dimensional real vector in $[0,1]^d$ parameterized by logits $\mathbf{x} \in \mathbb{R}^d$ or probabilities $\mathbf{p} = \sigma(\mathbf{x})$, $d$ is the number of parameters in a neural network, $u$ is a random variable sampled from a uniform distribution on the interval $(0,1)$ and $T$ is the temperature parameter to control the steepness of the sigmoid function $\sigma(\cdot)$. Moreover, the processed $\tau'_i$ is further re-scaled to $\tau''_i$ as Eq.\ref{mask_task_vectors} to avoid the mask being too sparse to keep the base performance \citep{yu2024language}.
\begin{align}
  \label{eq:concrete_mask_sampling}
  \mathbf{m} & = \sigma\left(\left(\log \frac{u}{1 - u} + \log \frac{\sigma(x)}{1 - \sigma(x)}\right) / \text{T} \right)            \\
  \label{mask_task_vectors}
  \tau''_i   & = \frac{\tau'_i}{\mathbb{E}_{m\sim\mathbf{m}}[m]} = \frac{\tau_i \odot \mathbf{m}}{\mathbb{E}_{m\sim\mathbf{m}}[m]}.
\end{align}

\subsection{The Pseudo Code of Our Proposed DAM}
Based on the implementation of the mask, we provide the pseudo-code of our proposed DAM to learn a shared and safety-aware subspace through learning a mask for task vectors. Exactly, as \ref{eq:scalarized_moop}, this mask is affected by two different losses designed for balancing the performance and safety for multi-task model merging.

\subsection{Discussion about the Pareto Optimal Balance between Performance and Safety. } \label{more_parato}
\begin{proof}[Proof of Theorem \ref{theorem:existence_of_pareto_front}]
  We prove this by contradiction. Assume $\mathcal{F}$ does not exist. Then for any set of masks $\mathbf{M}$, there always exists another set $\mathbf{M}'$ such that both $P(\mathbf{M}') > P(\mathbf{M})$ and $S(\mathbf{M}') > S(\mathbf{M})$.
  Consider a sequence of mask sets $\{\mathbf{M}_n\}_{n=1}^{\infty}$ where each $\mathbf{M}_{n+1}$ improves upon $\mathbf{M}_n$ in both performance and safety. Due to the bounded nature of $P(\cdot)$ and $S(\cdot)$ (e.g., accuracy and attack success rate are bounded between 0 and 1), this sequence must converge to some limit point $\mathbf{M}^*$.
  However, by our assumption, there must exist an $\mathbf{M}'$ that improves upon $\mathbf{M}^*$, contradicting the definition of $\mathbf{M}^*$ as the limit point. Therefore, our initial assumption must be false, and $\mathcal{F}$ must exist.
\end{proof}

\begin{algorithm}[H]
  \caption{Defense-Aware Merging to Acquire a Shared and Safety-Aware Subspace through a Mask across Tasks Vectors. This mask equals the Task-Shared Mask and the Backdoor-Detection Mask when $M_1=M_2$}
  \begin{algorithmic}[1]
    \STATE \textbf{Input:}
    \begin{enumerate}
      \item a pre-trained model $f$ parameterized by $\theta_{\text{pre}}$, a set of fine-tuned task vectors $\mathcal{T} = \{\tau_i\}_{i=1}^n$, which are partially injected backdoor,
      \item a set of target tasks $\mathcal{S}^{\text{test}}$  including unlabeled data $\mathcal{D}$,
      \item learning rate $\eta_{1}, \eta_{2}, \eta_{3}$, Hyper-parameters $\alpha$, epochs $E$, $L_1$ norm bound $\xi$
    \end{enumerate}
    \STATE \textbf{Output:} a mask $\mathbf{m}$ parameterized by logits $\mathbf{x}$.

    \STATE Initialize the logits $\mathbf{x}$ to zeros
    \FOR{$e=1$ to $E$}
    \STATE Initialize ${\Delta=\{\Delta_i\}_{i=1}^n}$ to zeros
    \STATE Mask task vectors $\mathcal{T}$ with $\mathbf{m}$ to get $\mathcal{T}'$, Rescale the masked task vectors $\mathcal{T}'$ to get $\mathcal{T}''$
    \STATE Initialize merging coefficient $\lambda = \{\lambda_i\}_{i=1}^n$ associated with model merging algorithm
    \STATE $\theta \leftarrow \text{MergeWeight}(\theta_{\text{pre}}, \mathcal{T}''; \lambda)$

    \IF{$\lambda$ is optimizable}
    \FOR {each task $s_i \in \mathcal{S}^{\text{test}}$}
    \STATE Sample a batch of unlabeled data $\mathcal{D}_i$ from $s_i$
    \STATE $l_i \leftarrow \mathcal{L}_i(f(\theta), \mathcal{D}_i)$
    \STATE $l_{i}^{perturbation} \leftarrow \mathcal{L}_i(f(\theta), \mathcal{D}_i+ \Delta_{i} )$
    \STATE Clip $\Delta_i$: $\Delta_i = \Delta_i \times \min (1, \frac{\xi}{\|\Delta_i\|_1})$
    \ENDFOR
    \STATE $\lambda^{\prime} \leftarrow \lambda - \eta_1 \nabla_{\lambda} \left(\sum_{i=1}^n  l_i \right)$
    \STATE $\theta \leftarrow \text{MergeWeight}(\theta_{\text{pre}},\mathcal{T}''; \lambda^{\prime})$ \textbf{// Update the merged model with the updated $\lambda^{\prime}$}
    \STATE $\Delta \leftarrow  \Delta - \eta_2 \nabla_{\Delta} \left(\sum_{i=1}^n l_{i}^{perturbation}\right) $ \textbf{//  Adversarial Trigger Recovery}
    \ENDIF

    \FOR {each task $s_i \in \mathcal{S}$}
    \STATE Sample a batch of unlabeled data $\mathcal{D}_i$ from $s_i$
    \STATE $l_i \leftarrow \mathcal{L}_i(f(\theta), \mathcal{D}_i)$
    \STATE $l_{i}^{perturbation} \leftarrow \mathcal{L}_i(f(\theta), \mathcal{D}_i+ \Delta_{i} )$
    \ENDFOR
    \STATE $\mathbf{x} \leftarrow \mathbf{x} - \eta_3 \nabla_{\mathbf{x}}\left(\sum_{i=1}^n (l_i + \alpha l_{i}^{perturbation}) \right)$
    \ENDFOR
    \STATE \textbf{Return:} the mask parameterized by logits $\mathbf{x}$.
  \end{algorithmic}
\end{algorithm}
\textbf{Convergence analysis.}
Here we discuss how the proposed algorithm converges to a Pareto optimal solution, balancing performance and safety.
Recall the optimization problem in Eq.(\ref{our algorithm_equation}), which can be rewritten as:
\begin{align}
  \mathcal{L}_{\text{perf}}(M_1)      & = \sum_{i=1}^{n} \mathcal{L} \left( \mathcal{A}(\theta_{\text{pre}}, \{ M_1 \odot \tau_{j} \}_{j=1}^n, \lambda^{*}), D_i \right),                          \\
  \mathcal{L}_{\text{safe}}(M_1, M_2) & = \sum_{i=1}^{n} \mathcal{L} \left( \mathcal{A}(\theta_{\text{pre}}, \{ M_1 \odot M_2 \odot \tau_{j} \}_{j=1}^n, \lambda^{*}), D_i + \Delta_i^{*} \right), \\
  \label{eq:scalarized_moop}
  \mathcal{L}_{total}(\mathbf{M})     & = L_{\text{perf}}(M_1) + \alpha L_{\text{safe}}(M_1, M_2).
\end{align}
In a multi-objective optimization problem (MOOP), a common approach is to scalarize the objectives by forming a weighted sum~\citep{schaffler2002stochastic,desideri2012multiple,burke2014search}.
Here, $\mathcal{L}_{total}(\mathbf{M})$ serves as the scalarized objective with normalized weights $(\frac{1}{1+\alpha}, \frac{\alpha}{1+\alpha}) \in \triangle^1$, balancing performance and safety through the parameter $\alpha$.
Assume the loss functions $\mathcal{L}_{\text{perf}}(M_1)$ and $\mathcal{L}_{\text{safe}}(M_1, M_2)$ are continuous and convex.
With $\alpha > 0$ and suitable learning rates that satisfy standard conditions
(e.g., diminishing step sizes or the Robbins-Monro conditions~\citep{robbins1951stochastic}), gradient descent methods converge to stationary points in convex optimization.
In scalarized MOOP, minimizing a weighted sum of convex objectives with positive weights yields solutions on the Pareto front.
At the stationary point $(M_1^*, M_2^*)$, improving one objective (e.g., decreasing $L_{\text{perf}}$) would necessitate worsening the other (increasing $L_{\text{safe}}$).
Thus, the solution $\mathbf{M}^* = (M_1^*, M_2^*)$ is Pareto optimal with respect to performance and safety. Adjusting $\alpha$ changes the weights in the scalarized objective, effectively moving the solution along the Pareto front $\mathcal{F}$.
\begin{corollary}[Performance-safety trade-off control]
  The hyper-parameter $\alpha$ in Eq.(\ref{our algorithm_equation}) and (\ref{eq:scalarized_moop}) controls the position on the Pareto front $\mathcal{F}$, with larger $\alpha$ values favoring safety over performance.
\end{corollary}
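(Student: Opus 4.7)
The plan is to derive the corollary by combining two facts about the scalarized objective $\mathcal{L}_{total}(\mathbf{M}) = L_{\text{perf}}(M_1) + \alpha L_{\text{safe}}(M_1, M_2)$ from Eq.(\ref{eq:scalarized_moop}): (a) every minimizer of $\mathcal{L}_{total}(\cdot;\alpha)$ for a fixed $\alpha > 0$ lies on the Pareto front $\mathcal{F}$ established in Theorem \ref{theorem:existence_of_pareto_front}; and (b) as $\alpha$ increases, the value of $L_{\text{safe}}$ at the minimizer weakly decreases while the value of $L_{\text{perf}}$ weakly increases. Property (a) guarantees that the image of the map $\alpha \mapsto \mathbf{M}^*(\alpha)$ actually lives on $\mathcal{F}$, so that (b) can be interpreted as a directed movement along $\mathcal{F}$ that favors safety over performance, which is exactly the content of the corollary.

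For (a) I would use the standard scalarization-to-Pareto reasoning already implicit in the convergence discussion: if a minimizer $\mathbf{M}^*(\alpha)$ were dominated by some $\mathbf{M}'$ with $P(\mathbf{M}') > P(\mathbf{M}^*(\alpha))$ and $S(\mathbf{M}') > S(\mathbf{M}^*(\alpha))$, then $\mathbf{M}'$ would have both strictly smaller $L_{\text{perf}}$ and strictly smaller $L_{\text{safe}}$, hence strictly smaller $\mathcal{L}_{total}(\cdot;\alpha)$ for any $\alpha > 0$, contradicting minimality. For (b) the key tool is the classical exchange-inequality argument for weighted-sum scalarizations. Pick $0 < \alpha_1 < \alpha_2$ and set $\mathbf{M}_i^* = \mathbf{M}^*(\alpha_i)$. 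Optimality of each $\mathbf{M}_i^*$ for its own weight gives
\begin{align}
L_{\text{perf}}(\mathbf{M}_1^*) + \alpha_1 L_{\text{safe}}(\mathbf{M}_1^*) &\le L_{\text{perf}}(\mathbf{M}_2^*) + \alpha_1 L_{\text{safe}}(\mathbf{M}_2^*),\\
L_{\text{perf}}(\mathbf{M}_2^*) + \alpha_2 L_{\text{safe}}(\mathbf{M}_2^*) &\le L_{\text{perf}}(\mathbf{M}_1^*) + \alpha_2 L_{\text{safe}}(\mathbf{M}_1^*).
\end{align}
Summing these inequalities cancels the $L_{\text{perf}}$ terms and yields $(\alpha_2 - \alpha_1)\bigl(L_{\text{safe}}(\mathbf{M}_2^*) - L_{\text{safe}}(\mathbf{M}_1^*)\bigr) \le 0$, so $L_{\text{safe}}(\mathbf{M}_2^*) \le L_{\text{safe}}(\mathbf{M}_1^*)$; substituting back into either inequality delivers $L_{\text{perf}}(\mathbf{M}_1^*) \le L_{\text{perf}}(\mathbf{M}_2^*)$. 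Under the convexity hypothesis already assumed in the convergence analysis, these become strict whenever the two objectives do not share a common minimizer, so larger $\alpha$ genuinely slides the operating point toward lower $L_{\text{safe}}$ (higher safety) and higher $L_{\text{perf}}$ (lower performance).

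The endpoint behavior then completes the "parameterizes the position on $\mathcal{F}$" part: in the normalized-weight view $(\tfrac{1}{1+\alpha}, \tfrac{\alpha}{1+\alpha}) \in \triangle^1$, sending $\alpha \to 0^+$ places all mass on performance and drives $\mathbf{M}^*(\alpha)$ to a performance-optimal extreme of $\mathcal{F}$, while $\alpha \to \infty$ places all mass on safety and drives it to the safety-optimal extreme. Combined with the monotonicity from (b), this exhibits $\alpha$ as a continuous dial that sweeps across $\mathcal{F}$ with the direction claimed in the corollary. The main obstacle I anticipate is non-strict convexity: if $L_{\text{perf}}$ or $L_{\text{safe}}$ has flat directions, a single $\alpha$ may correspond to a whole face of minimizers and the "trajectory along $\mathcal{F}$" becomes set-valued. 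I would deal with this by phrasing the conclusion in terms of the solution correspondence $\alpha \mapsto \arg\min_\mathbf{M} \mathcal{L}_{total}(\mathbf{M};\alpha)$ and proving monotone comparative statics for any measurable selection, which is sufficient for the qualitative trade-off statement made by the corollary.
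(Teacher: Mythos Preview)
Your proposal is correct and considerably more thorough than what the paper itself offers. In the paper, the corollary is not given a separate proof; it is stated immediately after the convergence-analysis paragraph, which simply asserts that (i) minimizing a positively-weighted sum of convex objectives yields a Pareto-optimal point (citing standard scalarized-MOOP results), and (ii) ``adjusting $\alpha$ changes the weights in the scalarized objective, effectively moving the solution along the Pareto front $\mathcal{F}$.'' That is the entirety of the paper's argument.

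Your part (a) matches the paper's reasoning, but your part (b) --- the exchange-inequality argument showing that $\alpha \mapsto L_{\text{safe}}(\mathbf{M}^*(\alpha))$ is monotone non-increasing and $\alpha \mapsto L_{\text{perf}}(\mathbf{M}^*(\alpha))$ is monotone non-decreasing --- is a genuine addition: the paper never justifies the \emph{direction} of movement along $\mathcal{F}$, only that movement occurs. Your endpoint analysis and your handling of the non-strict-convexity / set-valued-minimizer case are also absent from the paper. So your route is the same in spirit (scalarization $\Rightarrow$ Pareto optimality $\Rightarrow$ $\alpha$ parameterizes position), but you actually supply the monotone-comparative-statics step that makes ``larger $\alpha$ favors safety'' a theorem rather than an intuition.
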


\subsection{Discussions with the robustness of backdoor attack, fine-tuning, and continual learning methods}

We would like to clarify that our setting is indeed different from your mentioned robustness of backdoor attacks, model fine-tuning, and continual learning methods. The core theme of our work is related to model merging using different models (partial backdoor) rather than one model like your mentioned works.

For the traditional backdoor attack, the model provided by the adversary is the final deployment model. However, for model merging, the adversary only contributes to parts of the models, which are provided for the latter model merging, and the adversary has blind knowledge about how model merging is conducted \citep{zhang2024badmerging}. We are the first to explore the backdoor effect (backdoor succession and backdoor transfer) during model merging and provide a defense-aware merging method to mitigate this issue. Exactly, the object of the Backdoor Detection Mask is the same as existing trigger inversion or synthesis methods \citep{sun2023single,dunnett2024unlearning}, aiming to find a backdoor trigger inserted into the model. The key difference among them exists in the optimization process and special optimization constraints.

Moreover, some trigger inversion works need to recover the backdoor through an optimization process to flip a support set of clean images into the target class (e.g.smoothinv \citep{sun2023single}) and other works (e.g. BEAGLE \citep{dunnett2024unlearning}) propose model backdoor forensics techniques and need a few attack samples as instructions. For our proposed DAM as shown in Table 1, the optimization of the backdoor detection mask only needs unlabeled test data. Simultaneously, as shown in Figure 4, both the Backdoor Detection Mask and the Task-Shared Mask contribute to the whole merging process and they are optimized alternately in an iterative process to develop a merged model that effectively balances performance and safety.

Moreover, model merging has its unique challenges compared with finetuning-based methods and continuous learning methods \citep{zhu2024model,zhu2025remedy}. \emph{From the perspective of the problem}: Fine-tuning-based methods \citep{zhu2023enhancing,huangvaccine} directly add perturbations to the final model during fine-tuning, continual learning methods additionally consider the forgetting issues related to backdoor attacks during sequential training \citep{mi2023adversarial,abbasi2024brainwash,guo2024persistent}, but both of them only focus on the optimization of its single model during training. In contrast, model merging should additionally consider the interference from other task-specific models. As shown in Figure 4, masking the backdoor-related parameters will also influence the parameters of task interference. There exists a trade-off between performance and safety due to the conflict of these two parts of parameters, which is special for model merging. \emph{From the perspective of training Data available}: As shown in Table 1, we only have the unlabeled test data for model merging, but for finetuning-based and continual learning methods, they need some labeled data at least, which means the setting of model merging is different and difficult.

\subsection{Discussions about the parameter disentanglement for task vectors}\label{disentangle}

To clarify our contribution more clearly, we can disentangle the parameter components for the task vectors into two parts: task-general and task-specific parts. The task-general part represents the common parts for different task-specific models. When merging the parameters of clean and backdoored task-specific models:

(i) If the backdoor is injected in the task-general region on the backdoored model, for model merging(average the model weights from clean models and backdoored models), this backdoor effect can be seen as transferring from backdoored models to clean models. That's why the SVHN's ASR significantly increases after merging with the clean model. It's a unique and special phenomenon of existing model merging, which aims to utilize existing checkpoints to construct a new model without the training data for these checkpoints.

(ii) Moreover, if the backdoor is injected in the task-specific region on the backdoor model, the solution can be seen as similar to traditional backdoor defense works, because we don't need to consider the impact of the backdoored models on other clean models.

Exactly, for defenders, we only have different checkpoints without the knowledge of whether or not and which region they are injected with the backdoor. To solve the (i) and (ii) simultaneously, our proposed DAM design two masks to identify the parameters and reset them to pre-trained weights to solve the problem, with the assumption that the pre-trained model is clean and protected.

\section{More Experimental Results}\label{experiments_appendix}

\subsection{Backdoored Models Constructions}\label{backdoor_construction_appendix}
We construct the backdoored models adopting two well-known vit-specific backdoor attack strategies, including TrojVit~\cite{zheng2023trojvit} and BadVit~\citep{yuan2023you}. We provide detailed hyperparameters during our experiments as the Table \ref{backdoor_construction} shows. Notably, we only use a few data (10 percent of the full test data) for each task to construct backdoored models. The attack target includes the full connection layer(fc) and self-attention layer. Different from classific CNN-specific backdoor attacks(e.g.BadNets \citep{gu2017badnets} and LC \citep{turner2019label}) as the comparison experiments show in BadMerging \citep{zhang2024badmerging}, 
We mainly select the vit-specific backdoor attack methods(e.g. TrojVit \citep{zheng2023trojvit}) to inject the patch-wise backdoor, which has been verified to be especially effective for the vision transformer.

The comparison between clean and backdoor models adopting CLIP-ViT-B/32 and CLIP-ViT-L/14  can be
shown in Figure \ref{clip_base_individual} and Figure \ref{clip_large_individual}. We can observe that the backdoor models achieve good ACC but high ASR compared with the clean model, which may bring the potential risk to model merging. Similar to TrojVit, we simultaneously report the score of ASR for the clean and backdoor models to clarify the backdoor effect that misguides the model output toward the target class when the inputs contain a specific trigger.

\begin{table}[]
    \captionsetup{font={small,stretch=1}, labelfont={bf}}
    \renewcommand{\arraystretch}{1}
    \caption{The hyperparameters for backdoored models construction. }
    \centering
    \resizebox{0.95\textwidth}{!}{
    \begin{tabular}{c|cccccc}
    \hline
                                    & \textbf{Cars} & \textbf{RESISC45} & \textbf{EuroSAT} & \textbf{SVHN} & \textbf{MNIST}  & \textbf{DTD}    \\ \hline
    \textbf{batch\_size}            & 4             & 2                 & 2                & 1             & 8               & 2               \\
    \textbf{poison\_rate}           & 0.1           & 0.1               & 0.1              & 0.1           & 0.1             & 0.1             \\
    \textbf{num\_patch}             & 6             & 9                 & 9                & 9             & 9               & 9               \\
    \textbf{patch\_size}            & 16            & 16                & 16               & 16            & 16              & 16              \\
    \textbf{attack\_learning\_rate} & 0.22          & 0.22              & 0.22             & 0.22          & 0.22            & 0.22            \\
    \textbf{train\_attack\_iters}   & 250           & 250               & 250              & 250           & 250             & 250             \\
    \textbf{attack\_target}         & fc1           & self\_attention   & self\_attention  & fc1           & self\_attention & self\_attention \\ \hline
    \end{tabular}
    }
    \label{backdoor_construction}
\end{table}

\begin{figure*}
  \begin{center}
    \subfigure[MNIST]{
      \includegraphics[width=0.25\linewidth]{fig/individual_mnist.pdf}
    }
    \hspace{0.2cm} 
    \subfigure[CARS]{
      \includegraphics[width=0.25\linewidth]{fig/individual_cars.pdf}
    }
    \hspace{0.2cm}
    \subfigure[RESISC45]{
      \includegraphics[width=0.25\linewidth]{fig/individual_resisc45.pdf}
    }
    \\[-0.4cm] 
    \subfigure[EuroSAT]{
      \includegraphics[width=0.25\linewidth]{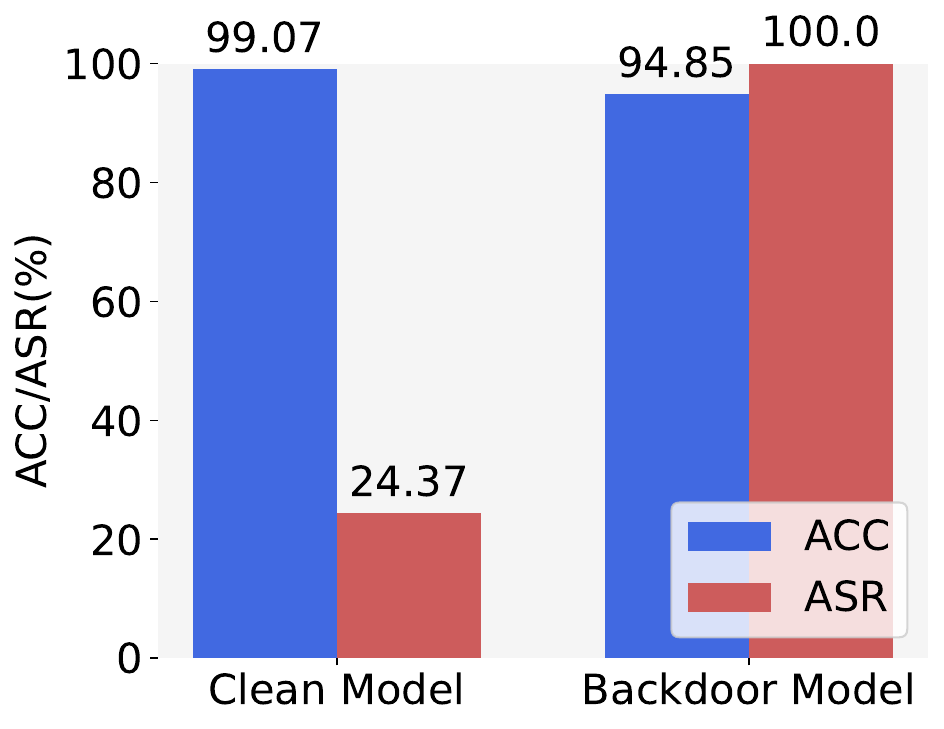}
    }
    \hspace{0.2cm}
    \subfigure[SVHN]{
      \includegraphics[width=0.25\linewidth]{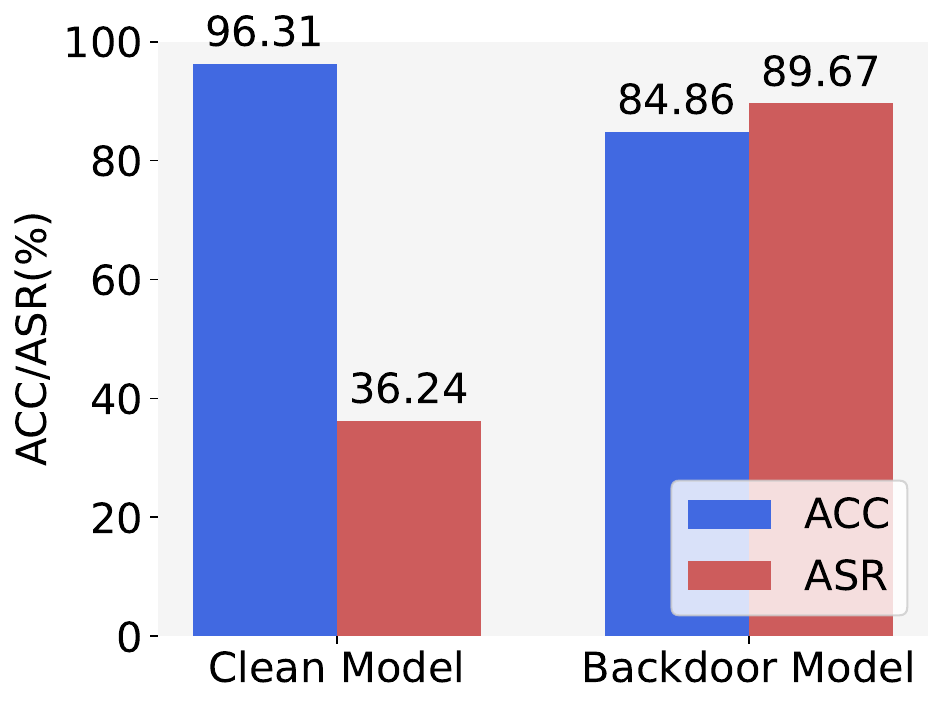}
    }
    \hspace{0.2cm}
    \subfigure[DTD]{
      \includegraphics[width=0.25\linewidth]{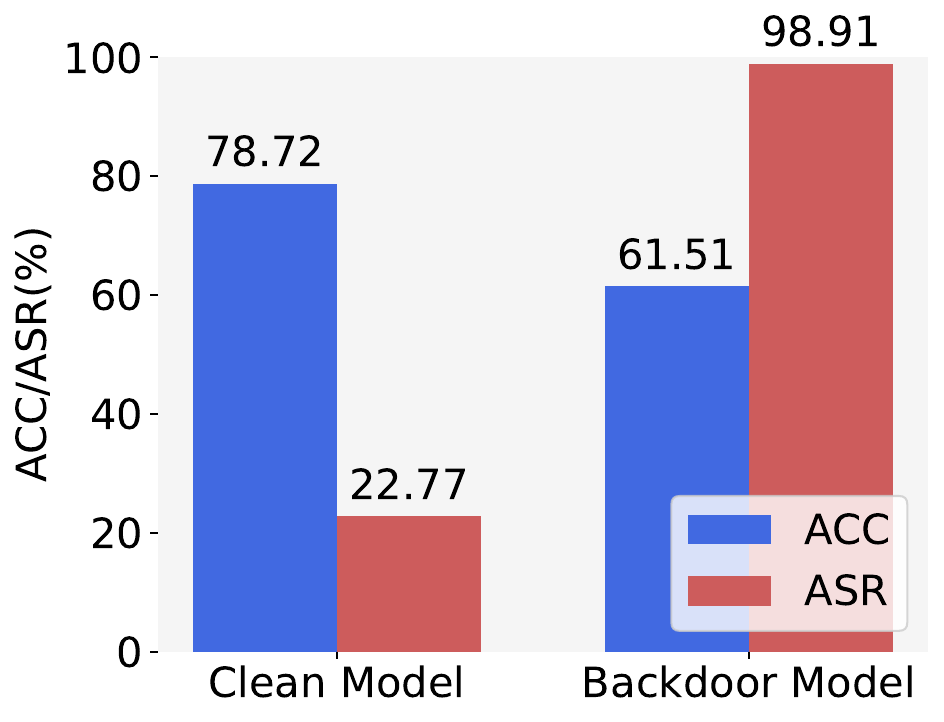}
    }
    \caption{Performance comparison between clean and backdoor(TrojVit) adopting CLIP-ViT-B/32. }
    \label{clip_base_individual}
  \end{center}
\end{figure*}

\begin{figure*}
  \begin{center}
    \subfigure[MNIST]{
      \includegraphics[width=0.25\linewidth]{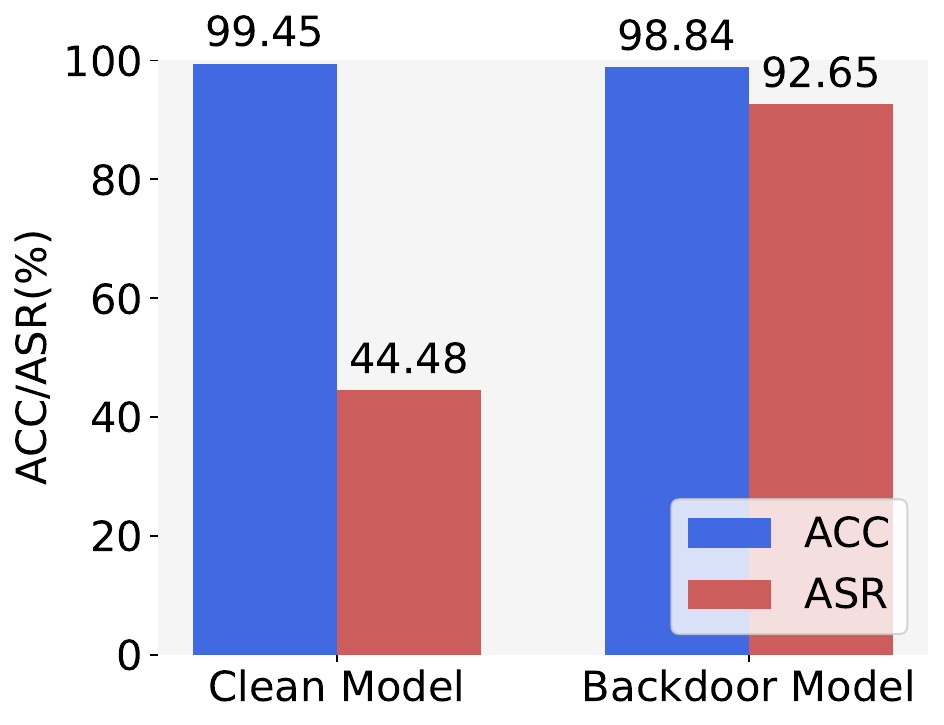}
    }
    \hspace{0.2cm} 
    \subfigure[CARS]{
      \includegraphics[width=0.25\linewidth]{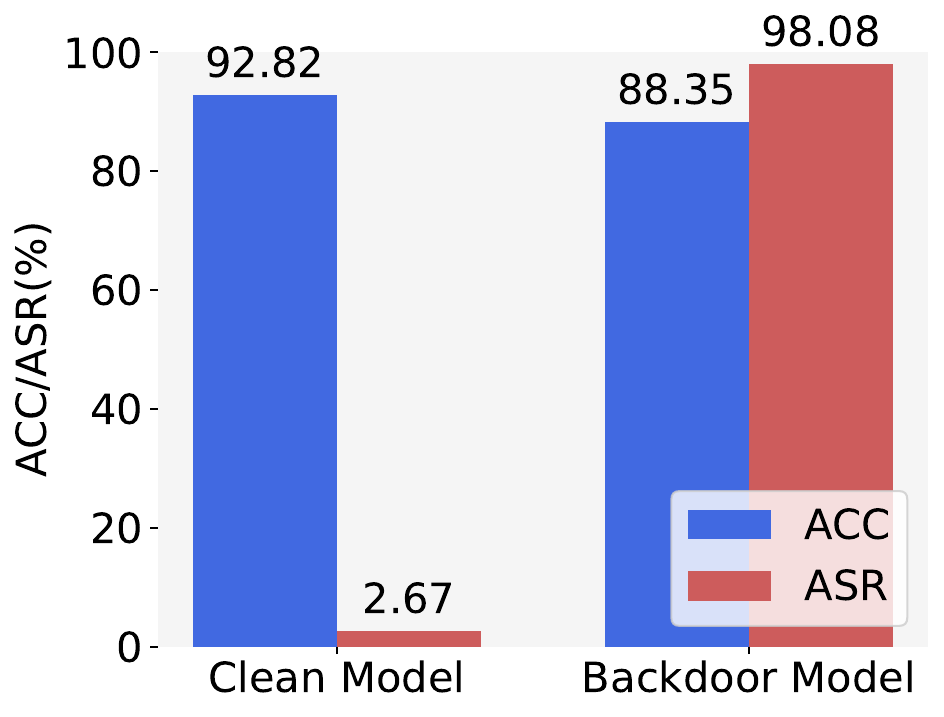}
    }
    \hspace{0.2cm}
    \subfigure[RESISC45]{
      \includegraphics[width=0.25\linewidth]{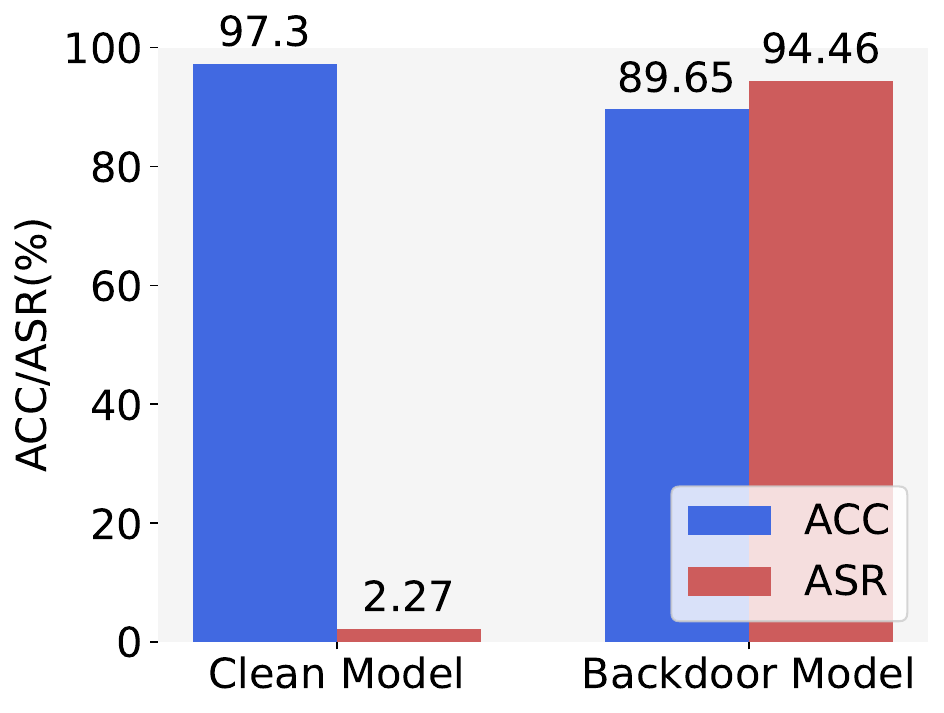}
    }
    \\[-0.4cm] 
    \subfigure[EuroSAT]{
      \includegraphics[width=0.25\linewidth]{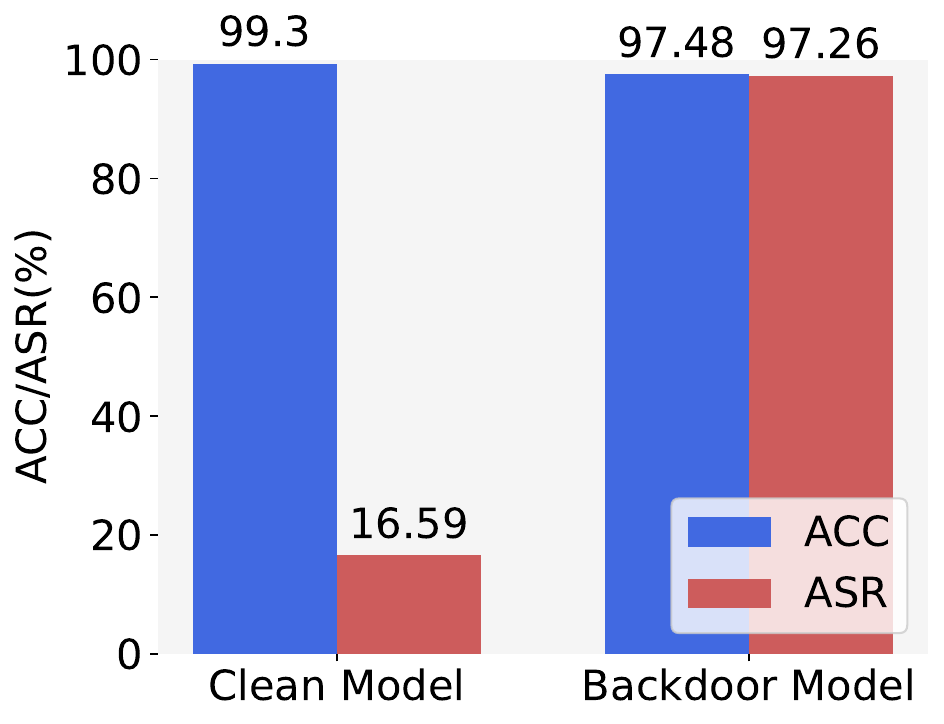}
    }
    \hspace{0.2cm}
    \subfigure[SVHN]{
      \includegraphics[width=0.25\linewidth]{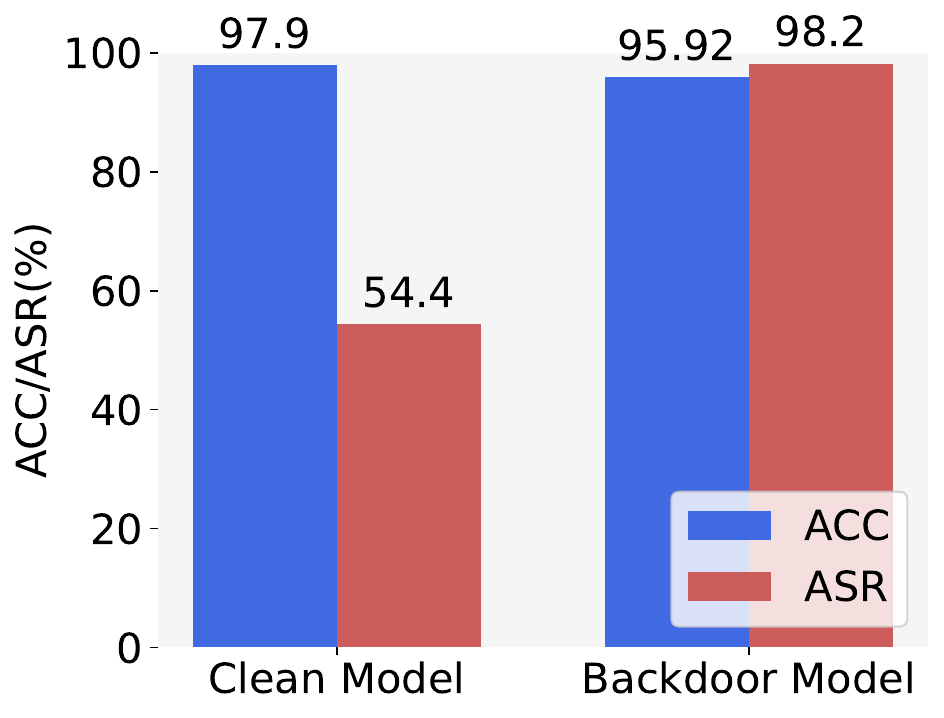}
    }
    \hspace{0.2cm}
    \subfigure[DTD]{
      \includegraphics[width=0.25\linewidth]{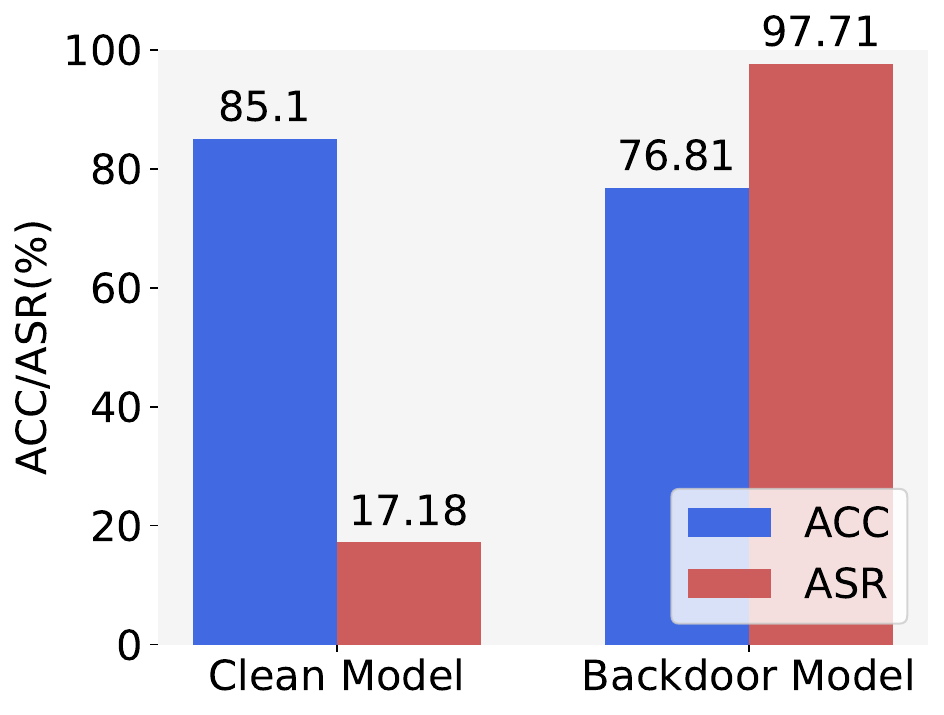}
    }
    \caption{Performance comparison between clean and backdoor(TrojVit) adopting CLIP-ViT-L/14. }
    \label{clip_large_individual}
  \end{center}
\end{figure*}

\subsection{More Experiments about Backdoor Succession and Backdoor Transfer}
As shown in Figure \ref{backdoor_succession_4} and Figure \ref{backdoor_transfer_appendix}, we provide additional experiments while merging four backdoored models and two clean models to further display the phenomenon of backdoor succession and backdoor transfer.

From the results, we can observe that the evaluation of existing merging methods is not fully consistent with the results shown in \citep{tang2024fusionbench} when we take performance and safety into consideration simultaneously. For example, Ties-Merging achieved unexpectedly poor results, sometimes even worse than RegMean and Fisher Merging, which is the opposite of its reported outcome. This can be attributed to the fact that the success of Ties-Merging relies heavily on the gradient conflict analysis of different tasks. However, this analysis just considers the task-specific performance without dealing with safety issues. When there exists backdoored task-specific models during merging, the causes of gradient conflicts are complex and multidimensional. Notably, compared with task-wise methods, Adamerging(task-wise) and Concrete AM(task-wise), the layer-wise version of corresponding methods can consistently achieve better performance, but their ASR results are still high. This further clarifies the single task-wise performance perspective is not always appropriate, highlighting the need for more exploration of the backdoor issues during multi-task merging.

\begin{figure*}
  \begin{center}
    \subfigure[Four Tasks Related to Backdoor]{
      \includegraphics[width=0.48\linewidth]{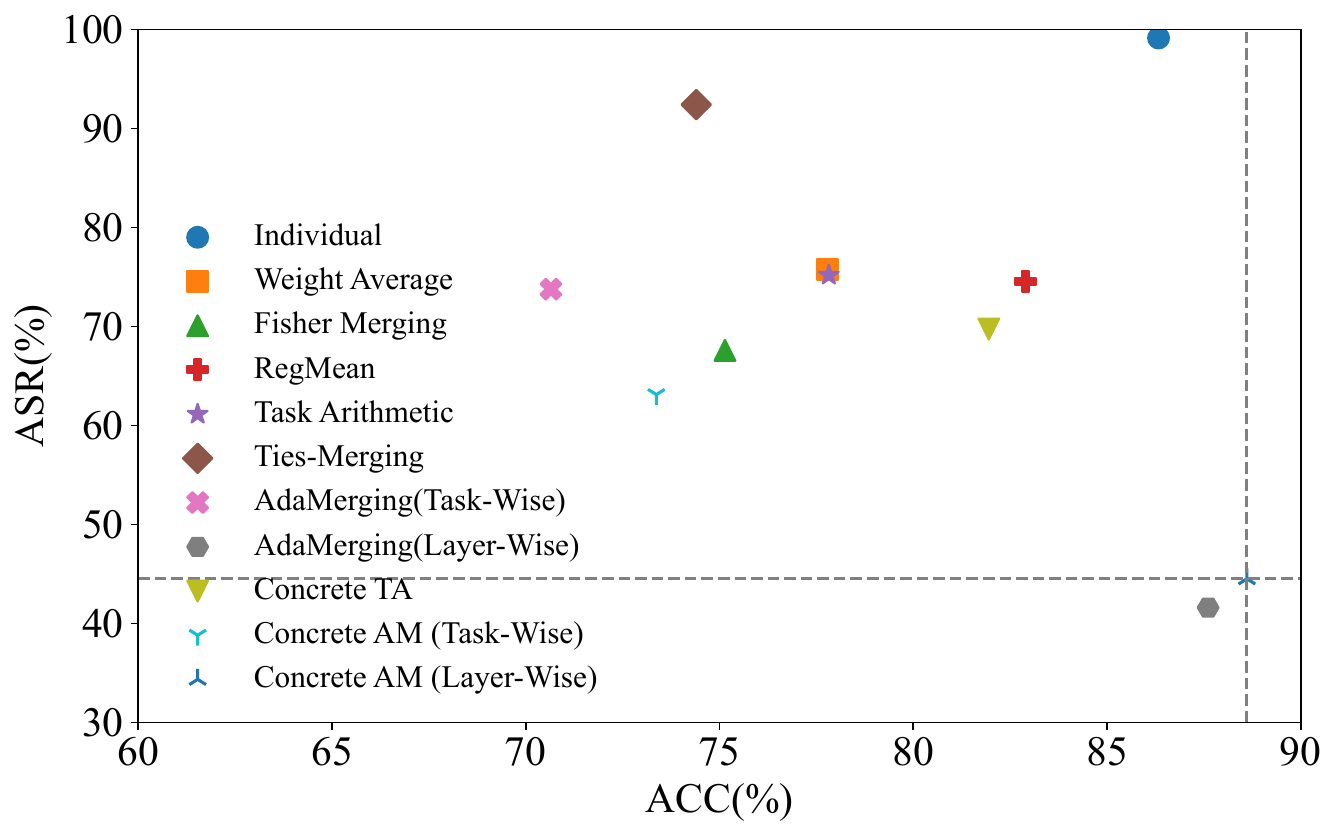}
    }
    \subfigure[Full Six Tasks]{
      \includegraphics[width=0.48\linewidth]{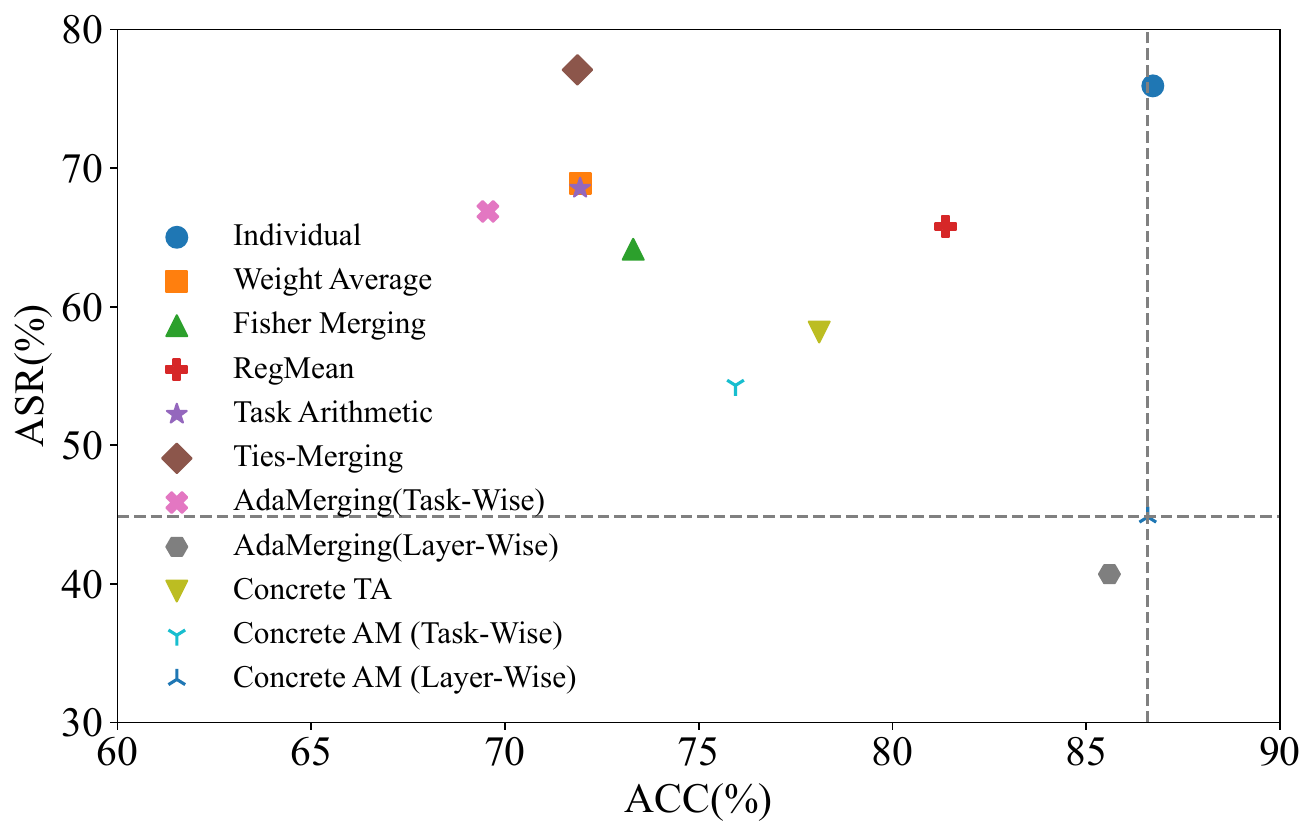}
    }
    \caption{Backdoor Succession Evaluation: Average performance on multi-tasks adopting previous merging methods using four backdoor models (RESISC45, EuroSAT, MNIST, and CARS) and two clean models (SVHN and DTD). }
    \label{backdoor_succession_4}
  \end{center}
\end{figure*}

\begin{figure*}
  \begin{center}
    \subfigure[EuroSAT (backdoor)]{
      \includegraphics[width=0.48\linewidth]{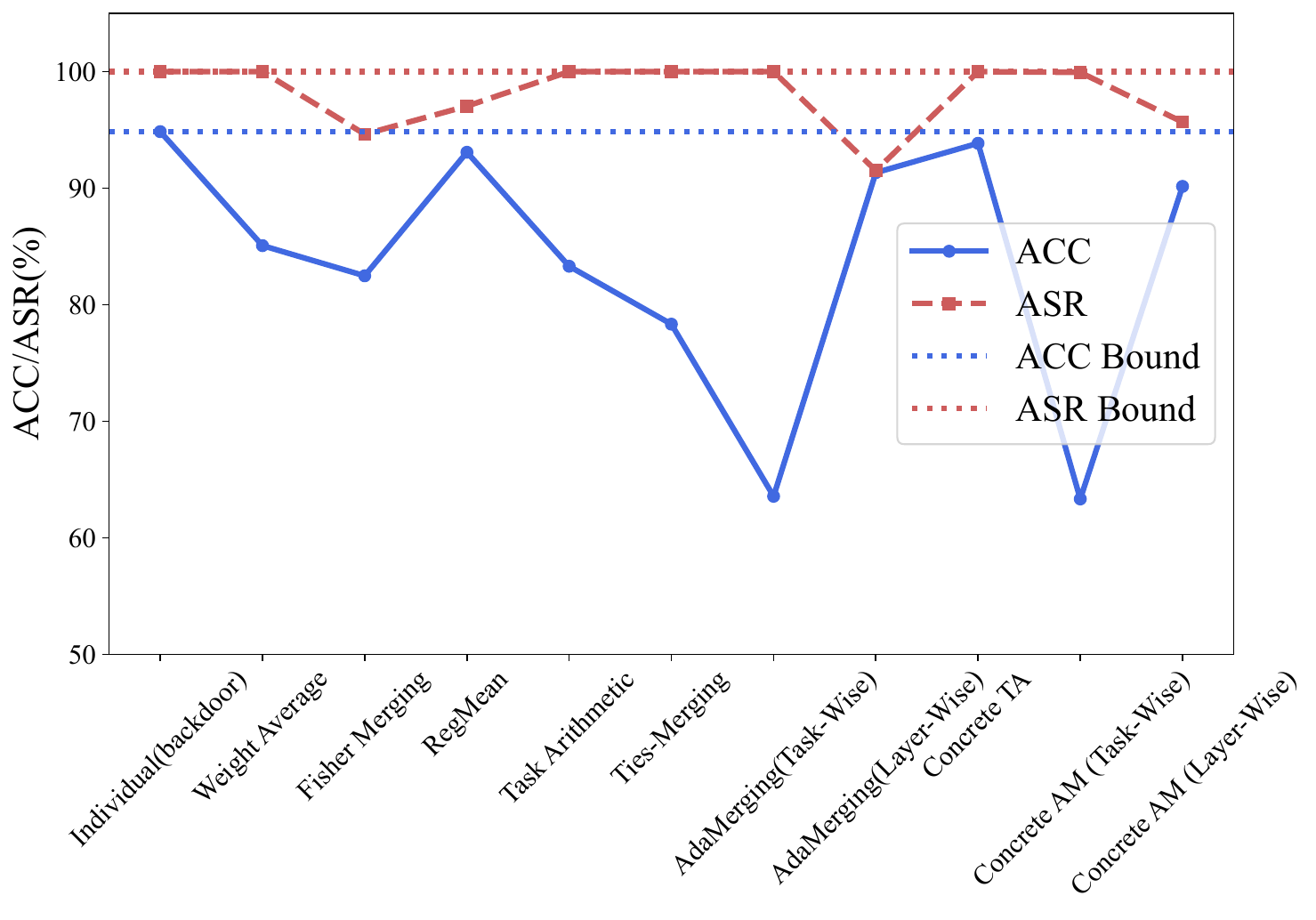}
      \label{EuroSAT (backdoor)}
    }\hspace{-0.1cm} 
    \subfigure[DTD (clean)]{
      \includegraphics[width=0.48\linewidth]{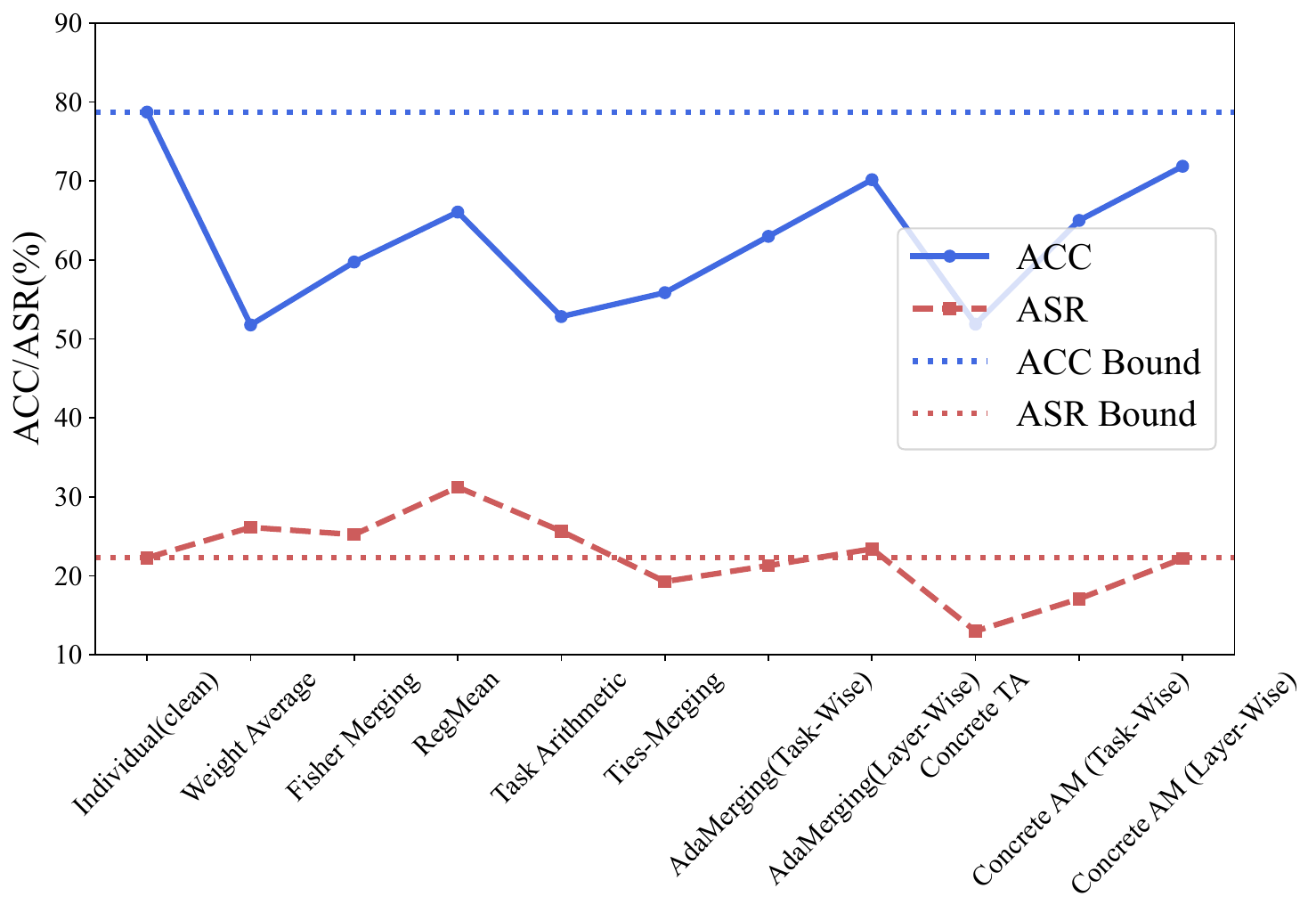}
      \label{DTD (clean)}
    }
    \caption{Backdoor Transfer Evaluation: Single-task performance adopting previous merging methods using two backdoor models (RESISC45 and EuroSAT) and four clean models (MNIST, CARS, SVHN and DTD). The ACC Bound and ASR Bound can be set according to the clean or backdoored individual fine-tuned models. The ideal merged model should be close or even upper to the ACC Bound and lower or at least close to the ASR Bound, but different merging methods exhibit unexpected trends due to the backdoor transfer.}
    \label{backdoor_transfer_appendix}
  \end{center}
\end{figure*}

\subsection{More Merging Results}
As shown in Table \ref{two_backdoor_resisc45_large}, we provide the merging results when we adopt the CLIP-ViT-L/14 as our pre-trained model and merge two backdoored task-specific models and four clean task-specific models. We can observe that DAM can outperform previous merging methods and post-defense methods in achieving a better trade-off between performance and safety, clearly reducing the ASR without sacrificing the ACC heavily. 

Moreover, Table \ref{two_backdoor_mnist} and Table \ref{two_backdoor_svhn} display the merging results while merging two backdoored models and four clean models. Different from Table \ref{two_backdoor_resisc45} in the main text, we select other task-specific backdoored models during merging. These results can further verify the effectiveness and robustness of the proposed DAM.

\subsection{Efficiency Studies}\label{efficiency_studies}
\begin{wrapfigure}{r}{0.45\textwidth} 
    \renewcommand{\arraystretch}{1}
    \begin{minipage}{0.45\textwidth}
        \begin{table}[H]
            \caption{Efficiency studies for the comparisons between DAM and post-defense methods.}
            \centering
            \resizebox{\textwidth}{!}{
            \begin{tabular}{ccccc}
            \toprule
                & ANP & AWM & SAU & DAM \\ \midrule
Merging Time      & 29  & 29  & 29  & /   \\
Post-Defense time & 15  & 18  & 20  & /   \\
Total Time        & 44  & 47  & 49  & 35 \\
                \bottomrule
            \end{tabular}
            }
            \label{efficiency}
        \end{table}
    \end{minipage}
\end{wrapfigure}

Exactly, the reported post-defense methods (AWM and SAU) are two-stage methods (merge first and then defense), and our proposed DAM is an end-to-end merging method without post-hoc cost (consider safety during merging).  We can achieve comparable or better performance compared with previous post-defense methods without additional training. We also report the training time to further clarify our contributions, where we calculate the train time (minutes) to achieve the merged model using six-task specific models considering the safety issues on a single Tesla V100 GPU with 32G memory (set the AdamW as the optimizer and the batch size as 16).

\subsection{The Effect of Domain Source on Model Merging}

It's worthwhile to discuss the domain source of the models used for model merging, which has been neglected by all previous merging methods. To find out the relationship between ASR drop and domain distribution. We first carefully review six used image classification datasets into four categories from the perspective of domain source: (i) Digit images: MNIST and SVHN;(ii) Remote sensing images: RESISC45 and EUROSAT; (iii) Texture images: DTD; (iv) 3D Objects related cars: Stanford cars.

Then, we provide the merging experiments when only the task-specific model on DTD is injected with the backdoor. Other task-specific models are clean and have different domains from the task-specific model on DTD. In this way, we can conduct merging experiments when one backdoor model in a certain domain + several models (backdoored or clean) from different domains.  From the results shown in Table \ref{domain_exploration}, we can find that previous merging methods can naturally weaken that backdoor effect through parameter-level merging. But this doesn't verify that our work about defense-aware merging is meaningless. There are two reasons as follows: (i) Take the results on AdaMerging (Layer-Wise) for example, though the ASR on the DTD can decrease badly after merging(98.90->31.33), the ASR on other tasks such as EUROAST(24.37-56.56) can increase unexpectedly. This means apart from the backdoor effect on the task related to the injected model, we should also focus on the ASR on the task related to the clean model used for merging. This new phenomenon during model merging has been explained as the Backdoor Transfer; (ii) Our proposed DAM further lowers the ASR compared with previous merging methods while sacrificing only about 1 in accuracy, achieving the best trade-off between performance and safety.

Moreover, in a way, introducing additional clean models for backdoored models on the same task can be seen as exploring the effect of merging models from the same domain. These experimental results can be also found in Table 5 in the paper. Specifically, during model merging, we only have different checkpoints without the knowledge of whether they are injected with a backdoor or not. Thus, in our paper, it's reasonable to explore whether directly merging the clean models and backdoored model on the same task is enough to mitigate the backdoor as you said. For each task related to the backdoored individual finetuned model, we additionally select a clean model for this task during multi-task merging. Notable, both WAG and LoRA-as-an-Attack defend the backdoor by directly averaging the homogeneous clean and backdoored full model weights or LoRa, we implement them by averaging the weights of original task-specific models and additionally introduced clean models. From the results, we can observe that DAM consistently achieves higher ACC and lower ASR in different settings. These results can verify that the backdoor effect from task-specific models can be mitigated by the clean model from the same domain in a way, but our proposed DAM further achieves higher ACC and lower ASR in different settings.

\begin{table}[]
    \setlength{\abovecaptionskip}{0cm}
    \setlength{\belowcaptionskip}{0cm}
    \captionsetup{font={small,stretch=1}, labelfont={bf}}
    \renewcommand{\arraystretch}{1}
    \caption{Results of multi-task merging while adopting two models attacked by TrojVit and BadVit respectively (CLIP-ViT-B/32, ACC$\uparrow$/ASR$\downarrow$). We highlight the best average score in bold and the second score with underline.}
    \centering
    \resizebox{0.95\textwidth}{!}{
        \begin{tabular}{c|cc|cc|cc|cc|cc|cc|cccc}
            \toprule
                \multirow{2}{*}{\textbf{METHOD}} & \multicolumn{2}{c|}{\textbf{MNIST(clean)}} & \multicolumn{2}{c|}{\textbf{Cars(clean)}} & \multicolumn{2}{c|}{\textbf{RESISC45(BadVit)}} & \multicolumn{2}{c|}{\textbf{EuroSAT(TrojVit)}} & \multicolumn{2}{c|}{\textbf{SVHN(clean)}} & \multicolumn{2}{c|}{\textbf{DTD(clean)}} & \multicolumn{4}{c}{\textbf{AVG}}                                                                                                                                                            \\
                                             & \textbf{ACC}                      & \textbf{ASR}                     & \textbf{ACC}                          & \textbf{ASR}                          & \textbf{ACC}                     & \textbf{ASR}                    & \textbf{ACC}                     & \textbf{ASR} & \textbf{ACC} & \textbf{ASR} & \textbf{ACC} & \textbf{ASR} & \textbf{ACC(2)}   & \textbf{ACC(6)}   & \textbf{ASR(2)}   & \textbf{ASR(6)}   \\ \hline
            Individual(All Clean)            & 99.60                             & 20.78                            & 78.14                                 & \textbf{2.39}                         & 95.30                            & 42.11                           & 99.07                            & 24.37        & 96.30        & 36.24        & 78.72        & 22.77        & 97.19             & 91.19             & 33.24             & 24.78             \\
            Individual(All Backdoor)         & 97.47                             & 98.56                            & 64.00                                 & 99.63                                 & 89.00                            & 98.37                           & 95.38                            & 99.58        & 84.86        & 89.11        & 61.50        & 98.90        & 92.19             & 82.04             & 98.98             & 97.36             \\
            Individual(Two Types of Backdoor) & 99.60 & 20.78 & 78.14 & 2.39 & 89.00 & 98.37 & 95.38 & 99.58 & 96.30 & 36.24 & 78.72 & 22.77 & 92.19 & 89.52 & 98.98 & 46.69 \\
            \midrule
            Weight Average                   & 89.87                             & 37.23                            & 63.44                                 & 1.51                                  & 74.51                            & 87.33                           & 85.44                            & 99.31        & 66.99        & 73.89        & 51.22        & 25.11        & 79.98             & 71.91             & 93.32             & 54.06             \\
            Fisher Merging                   & 97.23                             & 25.91                            & 67.54                                 & 1.48                                  & 78.33                            & 59.51                           & 80.18                            & 92.87        & 79.58        & 63.17        & 59.93        & 25.08        & 79.26             & 77.13             & 76.19             & 44.67             \\
            RegMean                          & 98.37                             & 19.45                            & 69.11                                 & \textbf{2.51}                         & 73.89                            & 58.86                           & 91.25                            & 90.04        & 92.94        & 42.74        & 61.21        & 29.77        & 82.57             & 81.13             & 74.45             & 40.56             \\
            Task Arithmetic                  & 91.30                             & 35.67                            & 63.60                                 & 1.74                                  & 74.94                            & 85.65                           & 87.56                            & 96.31        & 68.76        & 73.12        & 51.82        & 25.55        & 81.25             & 73.00             & 90.98             & 53.01             \\
            Ties-Merging                     & 98.04                             & 16.25                            & 63.64                                 & 0.81                                  & 66.75                            & 98.76                           & 76.89                            & 97.41        & 84.25        & 49.59        & 52.85        & 19.39        & 71.82             & 73.74             & 98.09             & 47.04             \\
            AdaMerging(Task-Wise)            & 97.83                             & 18.46                            & 67.80                                 & 1.05                                  & 74.98                            & 98.95                           & 67.54                            & 97.38        & 78.25        & 57.37        & 61.22        & 21.00        & 71.26             & 74.60             & 98.17             & 49.04             \\
            AdaMerging(Layer-Wise)           & 98.44                             & 13.51                            & 73.73                                 & 1.05                                  & 86.90                            & 54.87                           & 90.11                            & 88.05        & 90.91        & 38.21        & 69.51        & 24.11        & 88.51             & 84.93             & 71.46             & 36.63             \\
            Concrete TA                      & 96.89                             & 16.37                            & 61.62                                 & 0.88                                  & 79.59                            & 95.95                           & 92.85                            & 97.84        & 91.78        & 44.30        & 52.55        & 13.85        & 86.22             & 79.21             & 96.90             & 44.87             \\
            Concrete AM (Task-Wise)          & 98.68                             & 13.39                            & 68.82                                 & 1.11                                  & 80.24                            & 98.10                           & 71.25                            & 95.93        & 92.78        & 45.72        & 64.58        & 15.41        & 75.75             & 79.39             & 97.02             & 44.94             \\
            Concrete AM  (Layer-Wise)        & 99.11                             & 16.36                            & 75.95                                 & 1.09                                  & 87.20                            & 45.98                           & 94.51                            & 86.25        & 94.14        & 44.74        & 70.33        & 18.51        & \textbf{90.86}    & \textbf{86.87}    & 66.12             & 35.49             \\ \midrule
            ANP                              & 99.00                             & 15.58                            & 76.58                                 & 1.33                                  & 85.55                            & 42.51                           & 92.57                            & 82.51        & 93.58        & 44.04        & 68.98        & 18.35        & \underline{89.06} & \underline{86.04} & 62.51             & 34.05             \\
            AWM                              & 99.11                             & 15.63                            & 76.42                                 & 1.18                                  & 85.48                            & 42.38                           & 91.77                            & 81.01        & 92.99        & 43.81        & 69.25        & 18.68        & 88.63             & 85.84             & 61.70             & 33.78             \\
            SAU                              & 98.69                             & 14.87                            & 76.02                                 & 0.98                                  & 85.87                            & 42.98                           & 90.59                            & 80.58        & 92.21        & 41.58        & 69.65        & 18.78        & 88.23             & 85.51             & \underline{61.78} & \underline{33.30} \\ \midrule
            \textbf{DAM(Ours)}               & 98.88                             & 15.00                            & 76.14                                 & 0.99                                  & 85.83                            & 40.25                           & 90.48                            & 76.18        & 92.02        & 39.07        & 69.05        & 15.95        & 88.16             & 85.40             & \textbf{58.22}    & \textbf{31.24} \\
        \bottomrule
        \end{tabular}
    }
    \label{two_types_backdoor}
\end{table}

\begin{table}[]
  \renewcommand{\arraystretch}{1}
  \caption{Results of multi-task merging while adopting two models attacked by TrojVit (CLIP-ViT-L/14, ACC$\uparrow$/ASR$\downarrow$). We highlight the best average score in bold and the second score with underline.}
  \centering
  \resizebox{\textwidth}{!}{
    \begin{tabular}{c|cc|cc|cc|cc|cc|cc|cccc}
      \toprule
      \multirow{2}{*}{\textbf{METHOD}} & \multicolumn{2}{c|}{\bf MNIST(clean)} & \multicolumn{2}{c|}{\bf Cars(clean)} & \multicolumn{2}{c|}{\bf RESISC45(backdoor)} & \multicolumn{2}{c|}{\bf EuroSAT(backdoor)} & \multicolumn{2}{c|}{\bf SVHN(clean)} & \multicolumn{2}{c|}{\bf DTD(clean)} & \multicolumn{4}{c}{\textbf{AVG}}                                                                                                                                                            \\
                                       & \textbf{ACC}                          & \textbf{ASR}                         & \textbf{ACC}                                & \textbf{ASR}                               & \textbf{ACC}                         & \textbf{ASR}                        & \textbf{ACC}                     & \textbf{ASR} & \textbf{ACC} & \textbf{ASR} & \textbf{ACC} & \textbf{ASR} & \textbf{ACC(2)}   & \textbf{ACC(6)}   & \textbf{ASR(2)}   & \textbf{ASR(6)}   \\
      \midrule
      Individual(All Clean)            & 99.75                                 & 44.48                                & 92.82                                       & 2.67                                       & 97.30                                & 2.27                                & 99.30                            & 16.59        & 97.90        & 54.40        & 85.10        & 17.18        & 98.30             & 95.36             & 9.43              & 22.93             \\
      Individual(All Backdoor)         & 98.84                                 & 92.65                                & 88.35                                       & 98.08                                      & 89.65                                & 94.46                               & 97.48                            & 97.26        & 95.92        & 98.20        & 76.81        & 97.71        & 93.57             & 91.18             & 95.86             & 96.39             \\
      Individual(Two Backdoor)         & 99.75 & 44.48 & 92.82 & 2.67 & 89.65 & 94.46 & 97.48 & 97.26 & 97.90 & 54.40 & 85.10 & 17.18 & 93.57 & 93.78 & 95.86 & 51.74             \\
      \midrule
      Weight Average                   & 98.05                                 & 41.55                                & 82.63                                       & 1.88                                       & 85.46                                & 59.92                               & 93.52                            & 90.70        & 82.26        & 82.69        & 64.04        & 3.99         & 89.49             & 84.33             & 75.31             & 46.79             \\
      Fisher Merging                   & 97.51                                 & 44.35                                & 86.08                                       & 1.93                                       & 80.97                                & 16.11                               & 92.59                            & 88.15        & 90.53        & 81.66        & 73.09        & 4.20         & 86.78             & 86.80             & 52.13             & 39.40             \\
      RegMean                          & 99.27                                 & 33.55                                & 89.35                                       & 1.77                                       & 90.22                                & 50.10                               & 96.63                            & 66.44        & 96.45        & 77.18        & 76.33        & 8.19         & 93.43             & 91.38             & 58.27             & 39.54             \\
      Task Arithmetic                  & 98.05                                 & 41.52                                & 82.61                                       & 1.87                                       & 85.51                                & 59.97                               & 93.52                            & 90.74        & 82.27        & 82.68        & 64.04        & 3.99         & 89.52             & 84.33             & 75.36             & 46.80             \\
      Ties-Merging                     & 99.03                                 & 25.85                                & 84.65                                       & 1.07                                       & 86.63                                & 94.41                               & 89.67                            & 94.30        & 92.06        & 74.26        & 65.96        & 2.18         & 88.15             & 86.33             & 94.36             & 48.68             \\
      AdaMerging(Task-Wise)            & 97.86                                 & 30.18                                & 85.46                                       & 0.99                                       & 84.84                                & 73.57                               & 88.48                            & 93.79        & 80.35        & 73.80        & 77.55        & 2.66         & 86.66             & 85.76             & 83.68             & 45.83             \\
      AdaMerging(Layer-Wise)           & 99.26                                 & 18.59                                & 91.38                                       & 0.91                                       & 93.86                                & 2.02                                & 96.74                            & 50.78        & 96.46        & 64.41        & 83.51        & 7.07         & \textbf{95.30}    & 93.54             & 26.40             & 23.96             \\
      Concrete TA                      & 98.26                                 & 28.44                                & 86.97                                       & 1.35                                       & 85.51                                & 69.44                               & 87.51                            & 89.32        & 79.32        & 72.15        & 78.33        & 4.51         & 86.51             & 85.98             & 79.38             & 44.20             \\
      Concrete AM (Task-Wise)          & 98.86                                 & 28.18                                & 85.87                                       & 0.99                                       & 84.51                                & 66.57                               & 87.44                            & 89.54        & 78.35        & 69.80        & 78.21        & 2.99         & 85.98             & 85.54             & 78.06             & 43.01             \\
      Concrete AM  (Layer-Wise)        & 99.06                                 & 16.89                                & 91.64                                       & 0.93                                       & 92.88                                & 2.12                                & 96.44                            & 48.25        & 96.57        & 67.51        & 83.51        & 6.88         & \underline{94.66} & \textbf{93.35}    & 25.19             & 23.76             \\
      \midrule
      ANP                              & 98.45                                 & 16.55                                & 91.69                                       & 0.91                                       & 92.00                                & 1.98                                & 96.77                            & 47.55        & 96.81        & 67.59        & 83.00        & 6.54         & 94.39             & \underline{93.12} & 24.77             & 23.52             \\
      AWM                              & 98.15                                 & 15.89                                & 91.33                                       & 0.77                                       & 92.14                                & 2.12                                & 96.40                            & 47.40        & 96.22        & 66.12        & 83.11        & 6.12         & 94.27             & 92.89             & 24.76             & 23.07             \\
      SAU                              & 98.00                                 & 14.79                                & 90.89                                       & 0.54                                       & 92.18                                & 1.85                                & 96.32                            & 46.40        & 96.00        & 65.32        & 83.18        & 6.21         & 94.25             & 92.76             & \underline{24.13} & \underline{22.52} \\
      \midrule
      \textbf{DAM(Ours)}               & 97.94                                 & 14.51                                & 91.18                                       & 0.76                                       & 92.93                                & 2.40                                & 96.18                            & 43.18        & 92.76        & 55.71        & 82.57        & 4.09         & 94.56             & 92.26             & \textbf{22.79}    & \textbf{20.11}    \\
      \bottomrule
    \end{tabular}
  }
  \label{two_backdoor_resisc45_large}
\end{table}

\begin{table}[]
  \renewcommand{\arraystretch}{1}
  \caption{Results of multi-task merging while adopting two models attacked by TrojVit (CLIP-ViT-B/32, ACC$\uparrow$/ASR$\downarrow$). We highlight the best average score in bold and the second score with underline.}
  \centering
  \resizebox{\textwidth}{!}{
    \begin{tabular}{c|cc|cc|cc|cc|cc|cc|cccc}
      \toprule
      \multirow{2}{*}{\textbf{METHOD}} & \multicolumn{2}{c|}{\bf MNIST(backdoor)} & \multicolumn{2}{c|}{\bf Cars(backdoor)} & \multicolumn{2}{c|}{\bf RESISC45(clean)} & \multicolumn{2}{c|}{\bf EuroSAT(clean)} & \multicolumn{2}{c|}{\bf SVHN(clean)} & \multicolumn{2}{c|}{\bf DTD(clean)} & \multicolumn{4}{c}{\textbf{AVG}}                                                                                                                                                            \\
                                       & \textbf{ACC}                             & \textbf{ASR}                            & \textbf{ACC}                             & \textbf{ASR}                            & \textbf{ACC}                         & \textbf{ASR}                        & \textbf{ACC}                     & \textbf{ASR} & \textbf{ACC} & \textbf{ASR} & \textbf{ACC} & \textbf{ASR} & \textbf{ACC(2)}   & \textbf{ACC(6)}   & \textbf{ASR(2)}   & \textbf{ASR(6)}   \\
      \midrule
      Individual(All Clean)            & 99.60                                    & 20.78                                   & 78.14                                    & 2.39                                    & 95.30                                & 42.11                               & 99.07                            & 24.37        & 96.30        & 36.24        & 78.72        & 22.77        & 88.87             & 91.19             & 11.59             & 24.78             \\
      Individual(All Backdoor)         & 97.47                                    & 98.56                                   & 64.00                                    & 99.63                                   & 89.00                                & 98.37                               & 94.85                            & 100.00       & 84.86        & 89.67        & 61.50        & 98.90        & 80.74             & 81.95             & 99.10             & 97.52             \\ 
      Individual(Two Backdoor)         & 97.47 & 98.56 & 64.00 & 99.63 & 95.30 & 42.11 & 99.07 & 24.37 & 96.30 & 36.24 & 78.72 & 22.77 & 80.74 & 88.48 & 99.10 & 53.95            \\ 
      
      \midrule
      Weight Average                   & 91.88                                    & 68.53                                   & 62.65                                    & 50.48                                   & 73.60                                & 15.73                               & 83.04                            & 85.63        & 67.67        & 89.36        & 52.87        & 23.62        & 77.27             & 71.95             & 59.51             & 55.56             \\
      Fisher Merging                   & 93.61                                    & 74.73                                   & 49.68                                    & 68.24                                   & 75.22                                & 7.51                                & 85.19                            & 91.56        & 78.24        & 87.25        & 59.68        & 22.82        & 71.65             & 73.60             & 71.49             & 58.69             \\
      RegMean                          & 97.61                                    & 61.30                                   & 65.28                                    & 81.58                                   & 85.87                                & 9.49                                & 96.11                            & 54.07        & 92.04        & 67.84        & 67.23        & 27.18        & 81.45             & 84.02             & 71.44             & 50.24             \\
      Task Arithmetic                  & 91.84                                    & 68.15                                   & 62.49                                    & 49.98                                   & 73.63                                & 15.33                               & 83.22                            & 85.70        & 67.50        & 89.22        & 53.09        & 22.98        & 77.17             & 71.96             & 59.07             & 55.23             \\
      Ties-Merging                     & 97.28                                    & 84.39                                   & 59.46                                    & 94.66                                   & 75.32                                & 13.92                               & 84.26                            & 83.85        & 81.10        & 90.27        & 53.62        & 6.22         & 78.37             & 75.17             & 89.53             & 62.22             \\
      AdaMerging(Task-Wise)            & 96.60                                    & 56.58                                   & 49.73                                    & 10.58                                   & 78.95                                & 34.22                               & 80.11                            & 22.71        & 77.67        & 81.37        & 61.65        & 14.79        & 73.17             & 74.12             & 33.58             & 36.71             \\
      AdaMerging(Layer-Wise)           & 98.24                                    & 18.97                                   & 73.88                                    & 1.80                                    & 87.11                                & 11.95                               & 91.52                            & 59.22        & 92.77        & 53.14        & 70.48        & 22.71        & 86.06             & 85.67             & \underline{10.39} & 27.97             \\
      Concrete TA                      & 98.41                                    & 37.07                                   & 60.24                                    & 56.56                                   & 80.00                                & 26.44                               & 95.15                            & 81.59        & 90.96        & 62.27        & 52.50        & 12.39        & 79.33             & 79.54             & 46.82             & 46.05             \\
      Concrete AM (Task-Wise)          & 97.38                                    & 25.54                                   & 52.52                                    & 10.58                                   & 85.27                                & 36.19                               & 92.07                            & 64.78        & 93.16        & 50.56        & 67.47        & 10.05        & 74.95             & 81.31             & 18.06             & 32.95             \\
      Concrete AM  (Layer-Wise)        & 98.66                                    & 20.75                                   & 75.07                                    & 2.60                                    & 89.95                                & 13.48                               & 94.07                            & 52.37        & 93.81        & 60.81        & 71.86        & 19.63        & \textbf{86.87}    & \textbf{87.24}    & 11.68             & 28.27             \\ \midrule
      ANP                              & 97.25                                    & 19.84                                   & 74.81                                    & 1.89                                    & 90.32                                & 12.98                               & 95.48                            & 58.74        & 93.89        & 59.15        & 70.81        & 19.79        & \underline{86.68} & 87.01             & 11.51             & 28.71             \\
      AWM                              & 98.58                                    & 22.74                                   & 74.63                                    & 2.54                                    & 90.29                                & 14.63                               & 95.22                            & 41.19        & 92.97        & 63.43        & 71.17        & 18.32        & 86.61             & \underline{87.14} & 12.64             & 27.14             \\
      SAU                              & 98.40                                    & 23.53                                   & 73.44                                    & 2.60                                    & 90.33                                & 11.29                               & 94.85                            & 38.93        & 91.02        & 65.59        & 71.12        & 17.18        & 85.92             & 86.53             & 13.07             & \underline{26.52} \\ \midrule
      \textbf{DAM(Ours)}               & 98.62                                    & 17.83                                   & 74.01                                    & 1.13                                    & 88.51                                & 10.13                               & 89.41                            & 57.45        & 93.37        & 51.24        & 72.77        & 18.09        & 86.32             & 86.12             & \textbf{9.48}     & \textbf{25.98}    \\
      \bottomrule
    \end{tabular}
  }
  \label{two_backdoor_mnist}
\end{table}

\begin{table}[]
  \renewcommand{\arraystretch}{1}
  \caption{Results of multi-task merging while adopting two models attacked by TrojVit (CLIP-ViT-B/32, ACC$\uparrow$/ASR$\downarrow$). We highlight the best average score in bold and the second score with underline.}
  \centering
  \resizebox{\textwidth}{!}{
    \begin{tabular}{c|cc|cc|cc|cc|cc|cc|cccc}
      \toprule
      \multirow{2}{*}{\textbf{METHOD}} & \multicolumn{2}{c|}{\bf MNIST(clean)} & \multicolumn{2}{c|}{\bf Cars(clean)} & \multicolumn{2}{c|}{\bf RESISC45(clean)} & \multicolumn{2}{c|}{\bf EuroSAT(clean)} & \multicolumn{2}{c|}{\bf SVHN(backdoor)} & \multicolumn{2}{c|}{\bf DTD(backdoor)} & \multicolumn{4}{c}{\textbf{AVG}}                                                                                                                                                      \\
                                       & \textbf{ACC}                          & \textbf{ASR}                         & \textbf{ACC}                             & \textbf{ASR}                            & \textbf{ACC}                            & \textbf{ASR}                           & \textbf{ACC}                     & \textbf{ASR} & \textbf{ACC} & \textbf{ASR} & \textbf{ACC} & \textbf{ASR} & \textbf{ACC(2)} & \textbf{ACC(6)} & \textbf{ASR(2)}   & \textbf{ASR(6)} \\
      \midrule
      Individual(All Clean)            & 99.60 & 20.78 & 78.14 & 2.39 & 95.30 & 42.11 & 99.07 & 24.37 & 96.30 & 36.24 & 78.72 & 22.77 & 87.51 & 91.19 & 29.51 & 24.78           \\
      Individual(All Backdoor)         & 97.47 & 98.56 & 64.00 & 99.63 & 89.00 & 98.37 & 94.85 & 100.00 & 84.86 & 89.67 & 61.50 & 98.90 & 73.18 & 81.95 & 94.29 & 97.52           \\ 
      Individual(Two Backdoor)    & 99.60 & 20.78 & 78.14 & 2.39 & 95.30 & 42.11 & 99.07 & 24.37 & 84.86 & 89.67 & 61.50 & 98.90 & 73.18 & 86.41 & 94.29 & 46.37           \\ 
      \midrule
      Weight Average                   & 91.33                                 & 60.86                                & 63.82                                    & 1.24                                    & 74.90                                   & 18.62                                  & 83.11                            & 83.48        & 67.89        & 92.89        & 52.82        & 73.62        & 60.36           & 72.31           & 83.26             & 55.12           \\
      Fisher Merging                   & 90.58                                 & 42.75                                & 66.12                                    & 1.53                                    & 79.98                                   & 15.86                                  & 92.78                            & 83.67        & 83.59        & 82.01        & 52.29        & 31.81        & 67.94           & 77.56           & 56.91             & 42.94           \\
      RegMean                          & 96.16                                 & 42.10                                & 67.90                                    & 1.46                                    & 84.29                                   & 8.63                                   & 95.93                            & 32.22        & 84.24        & 81.81        & 60.64        & 69.31        & 72.44           & 81.53           & 75.56             & 39.26           \\
      Task Arithmetic                  & 89.49                                 & 60.54                                & 63.20                                    & 1.21                                    & 73.02                                   & 18.08                                  & 85.19                            & 83.67        & 65.89        & 92.73        & 51.70        & 72.82        & 58.80           & 71.42           & 82.78             & 54.84           \\
      Ties-Merging                     & 86.67                                 & 53.70                                & 58.95                                    & 0.31                                    & 73.63                                   & 19.27                                  & 81.19                            & 74.96        & 73.44        & 91.43        & 50.37        & 96.49        & 61.91           & 70.71           & 93.96             & 56.03           \\
      AdaMerging(Task-Wise)            & 98.36                                 & 33.80                                & 65.24                                    & 0.58                                    & 82.08                                   & 49.08                                  & 84.67                            & 86.04        & 58.92        & 84.27        & 43.72        & 71.22        & 51.32           & 72.17           & 77.75             & 54.17           \\
      AdaMerging(Layer-Wise)           & 97.92                                 & 16.01                                & 73.71                                    & 0.81                                    & 87.49                                   & 12.95                                  & 90.93                            & 60.26        & 92.34        & 62.42        & 70.27        & 28.83        & 81.31           & 85.44           & 45.63             & 30.21           \\
      Concrete TA                      & 98.02                                 & 22.36                                & 61.33                                    & 0.52                                    & 79.59                                   & 32.05                                  & 95.19                            & 77.33        & 88.28        & 63.07        & 51.87        & 74.52        & 70.08           & 79.05           & 68.80             & 44.98           \\
      Concrete AM (Task-Wise)          & 98.50 & 15.74 & 69.17 & 0.48 & 87.00 & 48.10 & 93.78 & 74.30 & 58.63 & 76.90 & 40.27 & 46.91 & 49.45 & 74.56 & 61.91 & 43.74           \\
      Concrete AM  (Layer-Wise)        & 98.60                                 & 14.54                                & 74.69                                    & 0.99                                    & 90.30                                   & 14.05                                  & 93.48                            & 60.04        & 93.73        & 57.79        & 71.07        & 29.73        & \underline{82.40}           & \textbf{86.98}  & 43.76             & 29.52           \\ \hline
      ANP                              & 98.54 & 15.99 & 75.19 & 1.09 & 89.73 & 16.75 & 92.93 & 64.33 & 93.83 & 59.49 & 71.44 & 29.84 & \textbf{82.64} & \underline{86.94} & 44.67 & 31.25           \\
      AWM                              & 97.94 & 15.94 & 74.06 & 0.83 & 87.75 & 13.21 & 91.33 & 60.70 & 92.44 & 62.27 & 70.37 & 28.89 & 81.41 & 85.65 & 45.58 & 30.31           \\
      SAU                              & 98.49                                 & 18.65                                & 73.30                                    & 1.54                                    & 90.48                                   & 14.70                                  & 94.74                            & 39.81        & 90.01        & 63.33        & 70.11        & 22.82        & 80.06           & 86.19           & \underline{43.07}             & \underline{26.81}           \\ \hline
      \textbf{DAM(Ours)}               & 98.48                                 & 14.77                                & 73.92                                    & 0.80                                    & 88.97                                   & 13.16                                  & 88.56                            & 39.45        & 93.24        & 57.64        & 71.01        & 24.15        & 82.13           & 85.70           & \textbf{40.90} & \textbf{25.00}  \\
      \bottomrule
    \end{tabular}
  }
  \label{two_backdoor_svhn}
\end{table}

\begin{table}[H]
    \centering
    \begin{minipage}{0.47\textwidth}
        \centering
        \caption{The backdoor transfer evaluated on the SVHN task related to the clean model.}
        \resizebox{\textwidth}{!}{
            \begin{tabular}{ccc}
            \toprule
            & \textbf{ACC(test on SVHN)} & \textbf{ASR(test on SVHN)} \\
            \midrule
            Individual(SVHN\_clean)   & \textbf{96.30} & \textbf{36.24} \\
            Weight Average            & 66.04          & 73.15          \\
            Fisher Merging            & 78.38          & 62.16          \\
            RegMean                   & 92.94          & 42.74          \\
            Task Arithmetic           & 67.76          & 73.12          \\
            Ties-Merging              & 85.06          & 49.59          \\
            AdaMerging(Task-Wise)     & 77.79          & 57.37          \\
            AdaMerging(Layer-Wise)    & 92.49          & 38.21          \\
            Concrete TA               & 91.20          & 44.30          \\
            Concrete AM (Task-Wise)   & 92.05          & 45.72          \\
            Concrete AM  (Layer-Wise) & 93.00          & 44.74          \\
            \bottomrule
            \end{tabular}
        }
        \label{backdoor_transfer_svhn}
    \end{minipage}%
    \hspace{0.03\textwidth} 
    \begin{minipage}{0.47\textwidth}
        \centering
        \caption{The backdoor succession evaluated on the task EUROSAT related to the backdoored model.}
        \resizebox{\textwidth}{!}{
            \begin{tabular}{ccc}
            \toprule
            & \textbf{ACC(test on EuroSAT)} & \textbf{ASR(test on EuroSAT)} \\
            \midrule
            Individual(EuroSAT\_backdoor) & \textbf{94.85} & \textbf{100.00} \\
            Weight Average                 & 85.07          & 100.00          \\
            Fisher Merging                 & 82.48          & 94.63           \\
            RegMean                        & 93.08          & 97.04           \\
            Task Arithmetic                & 83.30          & 100.00          \\
            Ties-Merging                   & 78.33          & 100.00          \\
            AdaMerging(Task-Wise)          & 63.56          & 100.00          \\
            AdaMerging(Layer-Wise)         & 91.33          & 91.51           \\
            Concrete TA                    & 93.85          & 100.00          \\
            Concrete AM (Task-Wise)        & 63.33          & 99.93           \\
            Concrete AM  (Layer-Wise)      & 90.15          & 95.67           \\
            \bottomrule
            \end{tabular}
        }
        \label{backdoor_succession_eurosat}
    \end{minipage}
\end{table}

\begin{table}[]
  \renewcommand{\arraystretch}{1}
  \caption{Results of multi-task merging for domain exploration while the task-specific model on DTD is injected with the backdoor. We highlight the best average score in bold and the second score with underline.}
  \centering
  \resizebox{\textwidth}{!}{
    \begin{tabular}{c|cc|cc|cc|cc|cc|cc|cccc}
      \toprule
      \multirow{2}{*}{\textbf{METHOD}} & \multicolumn{2}{c|}{\bf MNIST(clean)} & \multicolumn{2}{c|}{\bf Cars(clean)} & \multicolumn{2}{c|}{\bf RESISC45(clean)} & \multicolumn{2}{c|}{\bf EuroSAT(clean)} & \multicolumn{2}{c|}{\bf SVHN(backdoor)} & \multicolumn{2}{c|}{\bf DTD(backdoor)} & \multicolumn{4}{c}{\textbf{AVG}}                                                                                                                                     \\
                                       & \textbf{ACC}                          & \textbf{ASR}                         & \textbf{ACC}                             & \textbf{ASR}                            & \textbf{ACC}                            & \textbf{ASR}                           & \textbf{ACC}                     & \textbf{ASR} & \textbf{ACC} & \textbf{ASR} & \textbf{ACC} & \textbf{ASR} & \textbf{ACC(DTD)} & \textbf{ACC(6)} & \textbf{ASR(DTD)}   & \textbf{ASR(6)} \\
      \midrule
      Individual(DTD\_backdoor)        & 99.60           & 20.78          & 78.14          & 2.39           & 95.30            & 42.11            & 99.07            & 24.37           & 96.30          & 36.24          & 61.50           & 98.90           & 61.50             & 88.32             & 98.90             & 37.47             \\
Weight Average                   & 91.41           & 36.79          & 63.16          & 1.26           & 73.06            & 18.32            & 85.07            & 83.60           & 67.44          & 72.00          & 52.07           & 72.40           & 52.07             & 72.04             & 72.40             & 47.40             \\
Fisher Merging                   & 91.81           & 33.23          & 66.16          & 1.38           & 80.13            & 13.63            & 94.70            & 73.26           & 76.27          & 61.64          & 52.87           & 30.43           & 52.87             & 76.99             & 30.43             & 35.60             \\
RegMean                          & 98.35           & 18.29          & 67.97          & 1.34           & 83.84            & 10.40            & 95.00            & 33.37           & 92.95          & 41.12          & 60.58           & 69.47           & 60.58             & 83.12             & 69.47             & 29.00             \\
Task Arithmetic                  & 98.15           & 13.60          & 43.64          & 0.12           & 56.35            & 45.22            & 69.70            & 82.26           & 78.10          & 33.14          & 39.15           & 87.82           & 39.15             & 64.18             & 87.82             & 43.69             \\
Ties-Merging                     & 97.91           & 15.12          & 60.65          & 0.42           & 73.41            & 22.84            & 82.22            & 74.52           & 85.17          & 46.01          & 50.85           & 95.74           & 50.85             & 75.04             & 95.74             & 42.44             \\
AdaMerging(Task-Wise)            & 97.34           & 15.69          & 63.51          & 0.70           & 79.25            & 49.63            & 83.59            & 81.37           & 78.53          & 48.36          & 42.39           & 66.60           & 42.39             & 74.10             & 66.60             & 43.73             \\
AdaMerging(Layer-Wise)           & 98.13           & 13.09          & 73.74          & 0.82           & 87.22            & 13.33            & 90.78            & 60.78           & 92.71          & 38.31          & 69.95           & 29.57           & 69.95             & 85.42             & \underline{29.57}       & \underline{25.98}       \\
Concrete TA                      & 98.42           & 17.17          & 62.26          & 0.57           & 79.88            & 32.24            & 95.26            & 77.22           & 91.17          & 46.09          & 51.54           & 71.78           & 51.54             & 79.76             & 71.78             & 40.85             \\
Concrete AM (Task-Wise)          & 98.19           & 14.33          & 68.45          & 0.71           & 85.78            & 46.05            & 90.19            & 70.15           & 92.33          & 40.58          & 41.70           & 55.05           & 41.70             & 79.44             & 55.05             & 37.81             \\
Concrete AM  (Layer-Wise)        & 98.68           & 14.51          & 75.44          & 1.04           & 89.68            & 15.90            & 93.74            & 56.56           & 93.90          & 51.32          & 70.90           & 31.33           & \textbf{70.90}    & \textbf{87.06}    & 31.33             & 28.44             \\
DAM(Ours)                        & 98.63           & 14.60          & 75.36          & 1.03           & 89.97            & 13.98            & 93.56            & 55.44           & 92.24          & 34.57          & 70.85           & 24.31           & \underline{70.85}       & \underline{86.77}       & \textbf{24.31}    & \textbf{23.99}     \\
    \bottomrule
    \end{tabular}
  }
  \label{domain_exploration}
\end{table}

\end{document}